\title{Making Teams and Influencing Agents: Efficiently Coordinating Decision Trees for Interpretable Multi-Agent Reinforcement Learning}
\author{
Rex Chen,
Stephanie Milani,
Zhicheng Zhang,
Norman Sadeh,
Fei Fang
}
\newcommand{\nagents}{N}
\newcommand{\transitionprob}{P}
\newcommand{\aindex}{i}
\newcommand{\rew}{r}
\newcommand{\ob}{o}
\newcommand{\act}{a}
\newcommand{\s}{s} 
\newcommand{\jointob}{\mathbf{\ob}}
\newcommand{\jointact}{\mathbf{\act}}
\newcommand{\obsspace}{\mathcal{O}}
\newcommand{\actspace}{\mathcal{A}}
\newcommand{\statespace}{\mathcal{S}}
\newcommand{\obsfunc}{O}
\newcommand{\rewardfunc}{R}
\newcommand{\return}{\mathcal{R}}
\newcommand{\meanreturn}{\bar{\rewardfunc}}
\newcommand{\team}{\mathcal{T}}
\newcommand{\agentreward}{\rew_\aindex}
\newcommand{\agentobsspace}{\obsspace_\aindex}
\newcommand{\allobs}{\obsspace_1, ... , \obsspace_\nagents}
\newcommand{\allacts}{\actspace_1, ..., \actspace_\nagents}
\newcommand{\agentactspace}{\actspace_\aindex}
\newcommand{\dt}{DT}
\newcommand{\nn}{NN}
\newcommand{\Xcomment}[1]{{}}
\newcommand{\noaccents}[1]{#1}
\newcommand{\newagentvar}[3][\noaccents]{%
\expandafter\newcommand\expandafter{\csname #2\endcsname}{#1{#3}}%
\expandafter\newcommand\expandafter{\csname #2s\endcsname}{#1{\boldsymbol{#3}}}%
\expandafter\newcommand\expandafter{\csname #2smi\endcsname}[1][i]{#1{\boldsymbol{#3}}_{-##1}}%
\expandafter\newcommand\expandafter{\csname #2i\endcsname}[1][i]{#1{#3}_{##1}}%
\expandafter\newcommand\expandafter{\csname #2ith\endcsname}[1][i]{#1{#3}_{(##1)}}%
}
\DeclareMathOperator*{\argmax}{argmax}
\newcommand{\ceil}[1]{\left\lceil{#1}\right\rceil}
\algnewcommand{\IIf}[1]{\State\algorithmicif\ #1\ \algorithmicthen}
\algnewcommand{\EndIIf}{\unskip\ \algorithmicend\ \algorithmicif}
\algnewcommand{\LineComment}[1]{\Statex \hskip\ALG@thistlm \(\triangleright\) #1}
\algnewcommand{\IndentLineComment}[1]{\Statex \hskip\ALG@tlm \(\triangleright\) #1}
\newtheoremstyle{theorembf}
  {\topsep}
  {\topsep}
  {\itshape}
  {0pt}
  {\bfseries\scshape}
  {.}
  { }
  {\thmname{#1}\thmnumber{ #2}\textnormal{\thmnote{ (#3)}}}
\newtheorem{rsq}{RQ}%
\newtheoremstyle{theoremit}
  {\topsep}
  {\topsep}
  {\itshape}
  {0pt}
  {\scshape}
  {.}
  { }
  {\thmname{#1}\thmnumber{ #2}\textnormal{\thmnote{ (#3)}}}
\newtheorem{corollary}{Corollary}%
\crefname{rsq}{RQ}{RQs}
\Crefname{rsq}{RQ}{RQs}
\crefname{hyp}{hypothesis}{hypotheses}
\Crefname{hyp}{Hypothesis}{Hypotheses}
\crefname{corollary}{corollary}{corollaries}
\Crefname{corollary}{Corollary}{Corollaries}
\newtheorem{theorem}{Theorem}
\begin{document}

\maketitle

\begin{abstract}
Poor interpretability hinders the practical applicability of multi-agent reinforcement learning (MARL) policies. Deploying interpretable surrogates of uninterpretable policies enhances the safety and verifiability of MARL for real-world applications. However, if these surrogates are to interact directly with the environment within human supervisory frameworks, they must be both performant and computationally efficient. Prior work on interpretable MARL has either sacrificed performance for computational efficiency or computational efficiency for performance. To address this issue, we propose HYDRAVIPER, a decision tree-based interpretable MARL algorithm. HYDRAVIPER coordinates training between agents based on expected team performance, and adaptively allocates budgets for environment interaction to improve computational efficiency. Experiments on standard benchmark environments for multi-agent coordination and traffic signal control show that HYDRAVIPER matches the performance of state-of-the-art methods using a fraction of the runtime, and that it maintains a Pareto frontier of performance for different interaction budgets.
\end{abstract}

\section{Introduction}
\label{sec:intro}
Over the past decade, \emph{multi-agent reinforcement learning} (MARL) algorithms have achieved state-of-the-art performance in various challenging board and video games \cite{Silver2016,Vinyals2019}. They have also found success in other sequential decision-making scenarios based on critical real-world domains, including robotics \cite{Orr2023}, cybersecurity \cite{Panfili2018}, and traffic signal control (TSC) \cite{Chen2020}. However, the real-world applicability of these algorithms is hampered by two key challenges. First, the deep neural network (\nn) architectures that are needed to achieve good performance have thousands to millions of parameters. Second, the behaviour of RL agents is difficult to predict and verify due to its dependence on complex state spaces and long time horizons. Thus, human stakeholders understand and trust RL agents less than their simpler counterparts, even if RL yields superior performance \cite{Siu2021}. Conversely, more interpretable representations of policies can help stakeholders build appropriate levels of trust in RL agents \cite{Druce2021,Zhang2020}. 

In applications where the safety and verifiability of RL policies is critical \cite{Gilbert2023,Jayawardana2021}, such as TSC, users may deploy interpretable surrogate policies instead of expert NN policies. Such surrogate policies should be \emph{performant} --- capable of achieving high returns. In MARL, coordinating the training of surrogates is critical for performance: if multiple surrogates are deployed simultaneously, they cannot assume that they are interacting with performant experts, as their performance may be influenced by other agents' suboptimal behaviour. 

At the same time, surrogate policies should be \emph{computationally efficient}; it should be possible to generate them with minimal environment interactions and runtime. Frequent interactions with the environment are impractical in domains such as robotics \cite{Finn2017} and TSC \cite{Zang2020}, and simulators that have sufficient fidelity for real-world transferrability \cite{Chen2023} are computationally intensive. In human-in-the-loop frameworks where users provide oversight to correct undesirable policy behaviour \cite{Mandel2017}, the ability to quickly iterate on surrogate policies is also critical \cite{Wu2023}. More complex models capable of stronger performance and coordination capabilities are less efficient \cite{Milani2024}. 

Decision trees ({\dt}s) are an attractive model class for interpretable RL due to their comprehensibility \cite{Silva2020}. They also enable the design of responsible AI systems, as their branching rules can be easily verified and constrained by human experts or automated processes \cite{Blockeel2023}. {\dt}s lie at the core of the imitation learning framework VIPER \cite{Bastani2018}, which has been applied to distil NN-based RL policies into {\dt}s in domains such as TSC \cite{Jayawardana2021}, autonomous vehicles \cite{Schmidt2021}, and robotics \cite{Roth2023}. However, generalising VIPER to the MARL setting is challenging. Past work \cite{Milani2022} introduced two multi-agent VIPER algorithms, IVIPER and MAVIPER, which are both impractical for deployment. IVIPER fails to coordinate the training of {\dt}s, thus sacrificing performance;  MAVIPER trains {\dt}s in a coordinated but computationally inefficient manner.

To this end, we introduce HYDRAVIPER\footnote{Code: \url{https://github.com/lythronaxargestes/hydraviper-public}}, an efficient method to extract coordinated {\dt} policies for cooperative MARL. Our method makes three key algorithmic contributions: (1) HYDRAVIPER coordinates agent training by jointly resampling the training dataset for each team of cooperative agents. (2) When interacting with the environment to collect a training dataset, HYDRAVIPER adaptively collects critical trajectories closer to convergence. (3) When interacting with the environment for evaluation, HYDRAVIPER uses a multi-armed bandit-based evaluation strategy to identify promising sets of trained surrogates. Experiments demonstrate that HYDRAVIPER achieves our goal of balancing performance and computational efficiency. HYDRAVIPER also improves the applicability of {\dt}-based interpretable MARL policies: users can exchange training time for performance by altering its environment interaction budgets, but its performance remains optimal at different budget levels. Lastly, HYDRAVIPER's efficiency on large environments can be improved while maintaining coordination by dividing the agent set into mutually influential teams. 

\section{Related Work}
\paragraph{Interpretable Multi-Agent Learning}
Past methods for interpretable MARL have focused on using feature importance measures to construct saliency maps \cite{Iqbal2019,Heuillet2022,Liu2023,Motokawa2023}, building logical structures \cite{Kazhdan2020,Wang2020a,Ji2023}, and defining domain concepts \cite{Zabounidis2023}. Each of these categories of methods has limitations. Feature importances and saliency maps are visually clear, but only highlight aspects of the state space without showing how policies use them. Policies based on logical rules and concepts allow users to align the execution of these policies with domain knowledge, but require extensive feature engineering. By contrast, we learn simple policy representations grounded directly in the environment feature space.

\paragraph{Decision Trees for Reinforcement Learning}
Relative to deep {\nn}s, shallow {\dt} policy representations are intrinsically \cite{Molnar2020} and empirically \cite{Silva2020} more comprehensible. One line of work in {\dt}-based RL directly trains {\dt} policies \cite{Silva2020,Topin2021,Crespi2023,Liu2025} using relaxations amenable to direct optimisation. However, these methods suffer from training instability and performance degradation. Another line of work follows the \emph{VIPER} framework \cite{Bastani2018}, in which a surrogate {\dt} is trained by imitation learning of a performant expert. Although VIPER has achieved success in single-agent settings \cite{Schmidt2021,Roth2023,Jayawardana2021,Zhu2022}, only two VIPER-based algorithms exist for multi-agent settings: \emph{IVIPER} and \emph{MAVIPER} \cite{Milani2022}. IVIPER independently trains {\dt}s for each agent in a decentralised manner, enjoying computational efficiency at the cost of performance due to a lack of coordination. MAVIPER jointly trains {\dt}s in a centralised manner to achieve coordination, but suffers from computational inefficiency. Neither algorithm balances performance and computational efficiency.

\section{Background}
\label{sec:background}
\begin{figure*}[ht]
    \centering
    \includegraphics[trim={0cm 3cm 0cm 2.5cm},clip,width=0.9\linewidth]{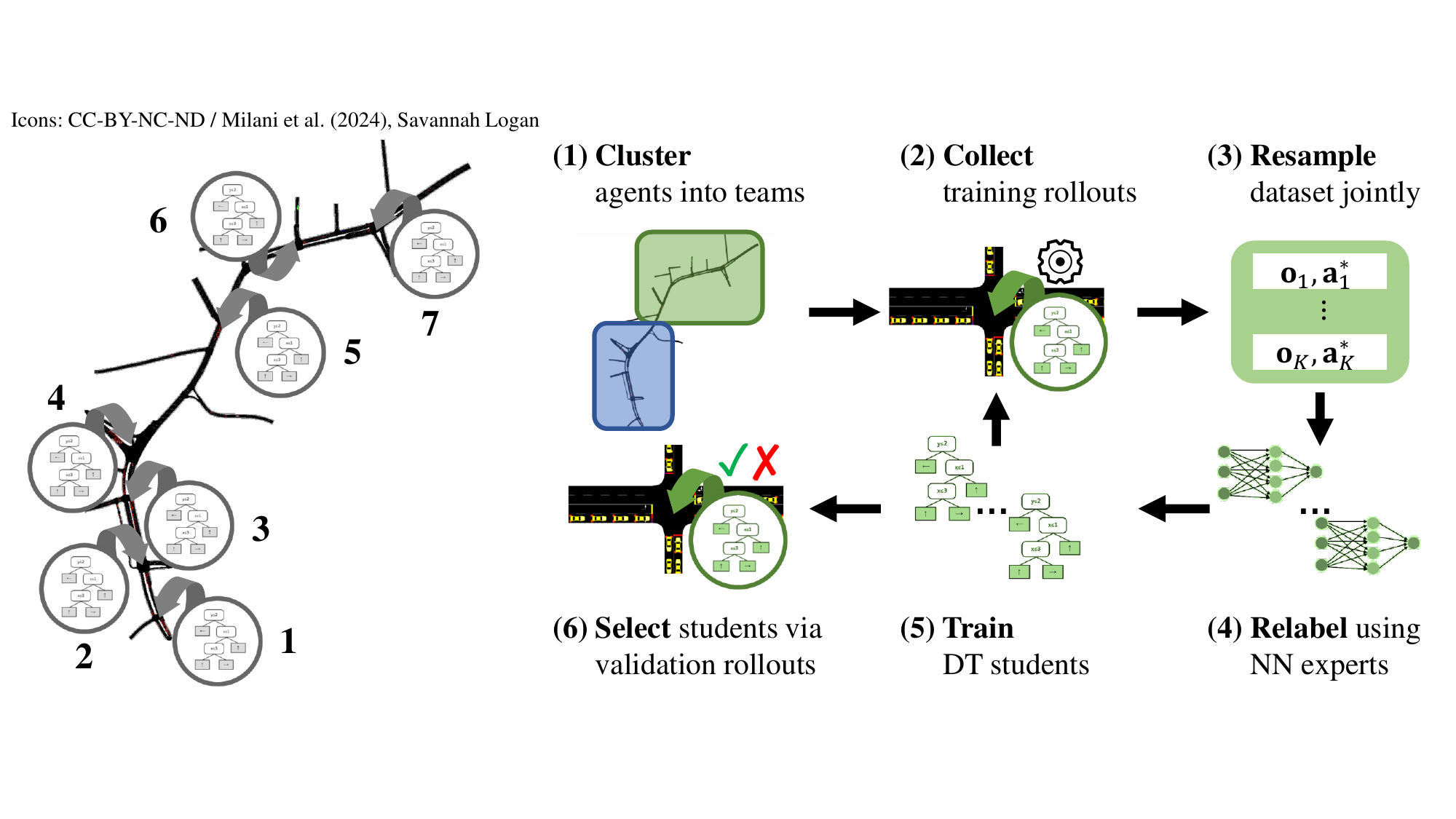}
    \caption{(L) Imitation learning in traffic signal control, where a decision tree must be learnt to imitate the RL-based policy of each intersection's signal controller agent (the seven-intersection Ingolstadt corridor TSC environment from \Cref{sec:exp:envs} is shown, with intersections numbered). (R) The HYDRAVIPER framework, in which {\dt} students are trained \emph{independently} using a \emph{jointly resampled} dataset of environment trajectories and relabelled by an NN expert.}
    \label{fig:algorithm}
\end{figure*}

\paragraph{Markov Games}
We model agent cooperation as a team-based \emph{Markov game}. A Markov game for $\nagents$ agents consists of a set of states $\statespace$ with initial state distribution $\rho: \statespace \rightarrow [0,1]$, and sets of actions $\allacts$ and observations $\allobs$ (consisting of features correlated with the state) for each agent $i$. The agent set is partitioned into disjoint teams $\team_1, \ldots, \team_L \subseteq [\nagents]$. Each agent chooses actions according to a policy $\pi_\aindex: \agentobsspace \rightarrow \agentactspace$. After agents simultaneously execute actions $\jointact$, the environment produces the next state based on the state transition function $\transitionprob: \statespace \times \actspace_1 \times \ldots \times \actspace_\nagents \rightarrow \statespace$, a joint reward for each team $\team_\ell$ based on $\rewardfunc_\ell : \statespace \times \prod_{\aindex \in \team_\ell} \actspace_\aindex \rightarrow \mathbb{R}$, and private observations $\obsfunc_\aindex : \statespace \rightarrow \agentobsspace$. Each agent maximises its team's return $\return_\ell = \sum_{t=0}^T \gamma^t \rewardfunc_\ell(s_t, \pi(\jointob_t))$ over $T$ timesteps, where $\gamma$ is a discount factor weighting the importance of future rewards.

\paragraph{MARL}
We refer to a policy profile as $\pi=(\pi_1, \ldots, \pi_\nagents)$, a policy profile excluding agent $\aindex$ as $\pi_{-\aindex}$, a joint policy profile for team $\team_\ell$ as $\pi_\ell = (\pi_i, \forall i \in \team_\ell)$, and a joint policy profile excluding all agents in $\team_\ell$ as $\pi_{-\ell} = (\pi_i, \forall i \notin \team_\ell)$. Each agent's \emph{value function} and \emph{state-action value} (or $Q$) \emph{function} characterise its expected returns under a policy profile $\pi$:
\begin{align*}
    &V^{\pi_\aindex} (s) = \agentreward + \gamma\sum_{s' \in \statespace} \transitionprob(\s, \pi_1(\ob_1), \ldots, \pi_{\nagents}(\ob_\nagents), \s')V^{\pi}_\aindex(\s'), \\
    &Q^{\pi_\aindex}(s, \jointact) = \agentreward + \gamma \sum_{s' \in S} \transitionprob(s, a_1, \ldots, a_\nagents,s')V^{\pi_\aindex}(s')
\end{align*}
Our algorithm assumes access to value and $Q$-functions that take the global observations of all agents, not states, as input: $V^{\pi_\aindex}(\jointob)$ and $Q^{\pi_\aindex}(\jointob, \jointact)$. Such $Q$-functions are used by actor-critic MARL algorithms that rely on centralised training with decentralised execution \cite{Lowe2017,Foerster2018}. Lastly, we define \emph{mean} value functions and state-action value functions for each team: $\bar{V}^{\pi_\ell}(\jointob) := \frac{1}{|\team_\ell|} \sum_{\aindex \in \team_\ell} V^{\pi_\aindex} (\jointob), \bar{Q}^{\pi_\ell}(\jointob, \jointact) := \frac{1}{|\team_\ell|} \sum_{\aindex \in \team_\ell} Q^{\pi_\aindex}(\jointob, \jointact)$.

\paragraph{Decision Trees}
A \emph{decision tree} (\dt) recursively partitions an input space $\mathcal{X}$ through functions $f_j: \mathcal{X} \rightarrow \mathbb{R}$ and thresholds $\tau_j$ at each internal node $j$. Each internal node induces a partition of samples, $P_j = {x \in \mathcal{X} : f_j(x) \leq \tau_j}$. For a {\dt} policy, internal nodes $(f_j, \tau_j)$ encode observation-dependent decision criteria, while leaf nodes $l \in \mathcal{L}$ map partitioned observations to actions: $\hat{\pi}_i: \mathcal{O}_i \rightarrow \mathcal{A}_i, \forall i \in [N]$.

\paragraph{VIPER}
VIPER \cite{Bastani2018} is an imitation learning framework, adapted from the more general DAGGER \cite{Ross2011}, that trains {\dt}s as surrogate policies. Given a trained \emph{expert} (\nn) policy $\pi^*$, VIPER iteratively generates \emph{student} (\dt) policies $\hat{\pi}^m$. Specifically, in each iteration $m$, VIPER:
\begin{enumerate}[(1)]
    \item \emph{Collects} $K$ new rollouts $\{\jointob, \hat{\pi}^{m-1}(\jointob)\}$ using the previous students from iteration $m-1$ (where $\hat{\pi}^0 := \pi^*$)
    \item \emph{Resamples} a dataset $\mathcal{D}$ from all trajectories collected so far, based on upweighting critical states where taking a suboptimal action may be costly in terms of $Q$-values:
    \begin{align*}
        p_k \propto V^{\pi^*}(\jointob_k) - \min_\jointact Q^{\pi^*}(\jointob_k, \jointact)
    \end{align*}
    \item \emph{Relabels} the dataset with the expert actions $\pi^*(\jointob_k)$
    \item \emph{Trains} new {\dt} students $\hat{\pi}^m$ on $\{\jointob_k, \pi^*(\jointob_k) \mid \jointob_k \in \mathcal{D}\}$
\end{enumerate}
After $M$ iterations, VIPER (5) \emph{selects} a student through validation on an additional set of rollouts. \citet{Ross2011} showed that such a procedure is guaranteed to find a student which is performant on the distribution of states that it induces.

\section{HYDRAVIPER}
\label{sec:hydraviper}
In this section, we present \emph{HYDRAVIPER} (\Cref{alg:hydraviper}), our algorithm for performant and efficient interpretable MARL. As shown in \Cref{fig:algorithm}, HYDRAVIPER builds on the DAGGER and VIPER frameworks by iteratively collecting data from environment rollouts to train {\dt} policies. HYDRAVIPER first (1) \emph{partitions agents} into clusters for scalability (line~4). Next, in each of $M$ iterations, HYDRAVIPER: (2) collects a dataset of rollouts from the environment, using an \emph{adaptive procedure} (lines~6--7); (3) resamples the dataset to prioritise learning the correct actions in critical states, using \emph{team-based $Q$-values} (lines~9--10); (4) and trains {\dt}s based on these datasets (lines~11--12). After it completes all $M$ training iterations, HYDRAVIPER (5) identifies the best-performing student for each agent, using a \emph{multi-armed bandit algorithm}, and returns them as a policy profile (lines~13--14). Now, we describe each of these algorithm components in detail.

\begin{algorithm}[ht]
\caption{HYDRAVIPER}
\label{alg:hydraviper}
\begin{algorithmic}[1]
\Statex \textbf{Input:} {\parbox[t]{0.85\linewidth}{%
    Markov game $(\statespace, \actspace, \transitionprob, \rewardfunc_\aindex, \obsfunc_\aindex)$, experts $\pi^*$, \\
    expert $Q$-functions $Q^{\pi^*}$, per-iteration rollout \\
    count $K_{\mathrm{train}}$, rollout budgets $(B_{\mathrm{train}}, B_{\mathrm{valid}})$, \\
    threshold $\epsilon$, iteration count $M$, scaling factor $c$, \\
    agent distance function $d$
    }
}
\Statex \textbf{Output:} Trained students $\hat{\pi}$

\State \textbf{Initialise} dataset $\mathcal{D} \gets \emptyset$, policies $\hat{\pi}_i^0 \gets \pi_i^*, \forall i \in N$
\State \textbf{Initialise} rollout count $n_{\mathrm{train}} \gets 0$
\State \textbf{Initialise} dropped rollout count $K_{\mathrm{drop}} \gets \infty$

\LineComment{\textcolor{red}{\Cref{sec:hydraviper:clustering}: Agent Clustering}}
\State \textbf{Cluster} agents $\team_1, \ldots, \team_L \gets \textbf{Partition}(\Gamma, \pi^*, d)$

\For{$m \in \{1, \dots, M\}$}
    \IndentLineComment{\textcolor{red}{\Cref{sec:hydraviper:training}: Training Rollouts}}
    \State $\mathcal{D}, n_{\mathrm{train}} \gets$ \textbf{TR-A}{\parbox[t]{0.55\linewidth}{%
        $(\mathcal{D}, \hat{\pi}^{m-1}, m, K_{\mathrm{train}}, B_{\mathrm{train}}$, \\
        $K_{\mathrm{drop}}, n_{\mathrm{train}})$
        }
    }

    \State \textbf{Reinitialise} dropped rollout count $K_{\mathrm{drop}} \gets \infty$
    \For{each team $\team_\ell \in \{\team_1, \ldots, \team_L\}$}
        \IndentLineComment{\textcolor{red}{\Cref{sec:hydraviper:resampling}: Dataset Resampling}}
        \State $\mathcal{D}_\ell', K_{\mathrm{drop}}' \gets$ \textbf{C-Q}$(\mathcal{D}_\ell, \team_\ell, \pi^*, Q^{\pi^*}, \epsilon)$
        \State $K_{\mathrm{drop}} \gets \min(K_{\mathrm{drop}}, K_{\mathrm{drop}}')$
        \For{each agent $i \in \team_\ell$}
            \State $\hat{\pi}_i^m \gets \textbf{TrainDT}(\mathcal{D}_\ell')$
        \EndFor
    \EndFor
\EndFor

\For{each team $\team_\ell \in \{1, \dots, L\}$}
    \IndentLineComment{\textcolor{red}{\Cref{sec:hydraviper:validation}: Validation Rollouts}}
    \State $\hat{\pi}_i, \forall i \in \team_\ell \gets$ \textbf{VR-UCB}$(\{\hat{\pi}^m_\ell\}_{m=1}^M, \team_\ell, B_{\mathrm{valid}}, c)$
\EndFor
\State \Return $\hat{\pi} = (\hat{\pi}_1, \dots, \hat{\pi}_N)$
\end{algorithmic}
\end{algorithm}

\subsection{Dataset Resampling: Centralised-Q Weighting}
\label{sec:hydraviper:resampling}
\begin{algorithm}[t]
\caption{Centralised-Q Resampling (C-Q)}
\label{alg:c-q}
\begin{algorithmic}[1]
\Statex \textbf{Input:} {\parbox[t]{0.85\linewidth}{%
    Team dataset $\mathcal{D}_\ell$, team $\team_\ell$, experts $\pi^*$, \\
    expert $Q$-functions $Q^{\pi^*}$, threshold $\epsilon$
    }
}
\Statex \textbf{Output:} {\parbox[t]{0.75\linewidth}{%
    Resampled dataset $\mathcal{D}_\ell'$, dropped rollout \\
    count $K_{\mathrm{drop}}$
    }
}

\State{\parbox[t]{0.95\linewidth}{%
    \textbf{Set} weights for each $(\jointob_k, \jointact_k) \in \mathcal{D}_\ell$: \\
    \resizebox{\linewidth}{!}{
        $p_{\ell k} \gets \bar{V}^{\pi_\ell^*}(\jointob_k) - \min_{\jointact_\ell} \bar{Q}^{\pi_\ell^*}(\jointob_k, \jointact_\ell, \pi^*_{-\ell}(\jointob_{-\ell k}))$
    }
    }
}  

\State \textbf{Update} \resizebox{0.85\linewidth}{!}{
    $K_{\mathrm{drop}} \gets \min_\ell \left\lceil \frac{1}{T} |\{(\jointob_k, \jointact_k) \in \mathcal{D}_\ell \mid p_{\ell k} \leq \epsilon\}|\right\rceil$
}

\State \textbf{Resample} dataset: $\mathcal{D}_\ell' \gets \{(\jointob_k, \jointact_k) \sim p_{\ell k}\}$
\State \Return $\mathcal{D}_\ell', K_{\mathrm{drop}}$
\end{algorithmic}
\end{algorithm}

VIPER-based algorithms include a dataset resampling step (\Cref{alg:hydraviper}, lines~9--10) so that students can focus their learning on more critical states. At a high level, they construct a training dataset by computing sample weights on the aggregated dataset of environment rollouts, typically using some notion of value based on the expert $Q$-functions. Measuring value is straightforward in the single-agent setting, but --- as we have mentioned --- a key obstacle in multi-agent learning is efficient coordination among agents. To address this challenge, HYDRAVIPER induces coordination in the resampling step using a team-based notion of value (\Cref{alg:c-q}), but trains {\dt}s independently for each agent. 

Specifically, HYDRAVIPER resamples the dataset $\mathcal{D}$ for {\dt} construction based on weights $p_{\ell k}$, which represent the relative importance of each sample for each team of agents $\team_\ell$ (\Cref{alg:c-q}, line 1). Past work computed this importance based on \emph{individual} $Q$-functions, meaning that each agent must maintain its own dataset $\mathcal{D}_i$ and induce coordination through (typically computationally expensive) joint training procedures. By contrast, we propose an intuitive change: HYDRAVIPER uses the mean of the expert $Q$-functions within each \emph{team} of coordinated agents, $\bar{Q}^{\pi_\ell^*} := \frac{1}{|\team_\ell|} \sum_{j \in \team_\ell} Q^{\pi_j^*}$, to prioritise samples according to their value to the team. Then, we compute the weights as the difference in value between the optimal joint team action and the worst-case joint team action. Intuitively, highly-weighted samples are those where coordinating on joint actions matters for performance. The weights are defined as:
\begin{align}
    p_{\ell k} &\propto \bar{Q}^{\pi_\ell^*}(\jointob_k, \pi^*(\jointob_k)) - \min_{\jointact_\ell} \bar{Q}^{\pi_\ell^*}(\jointob_k, \jointact_\ell, \pi^*_{-\ell}(\jointob_{-\ell k})) \nonumber \\
    &= \bar{V}^{\pi_\ell^*}(\jointob_k) - \min_{\jointact_\ell} \bar{Q}^{\pi_\ell^*}(\jointob_k, \jointact_\ell, \pi^*_{-\ell}(\jointob_{-\ell k})).
    \label{eq:resample_prob}
\end{align}
For further gains in sample efficiency, HYDRAVIPER does not compute $p_{\ell k}$ by enumerating joint actions over all agents in the environment. Instead, it only enumerates possible joint actions $\jointact_\ell$ over the \emph{team} and uses expert actions $\pi_{-\ell}^*(\jointob_{-\ell})$ for the opponent agents. This novel resampling procedure eliminates the need for per-agent datasets in IVIPER and MAVIPER, allowing agents to prioritise the same critical states without computationally expensive joint training. 

HYDRAVIPER uses each team's jointly sampled dataset $\mathcal{D}_\ell$ to independently train {\dt}s for each agent $i$ (\Cref{alg:hydraviper}, lines 8--9). The {\dt} $\hat{\pi}_i$ uses individual observations $o_i$ to fit $\pi_i^*$'s actions in the dataset. Modifying the input dataset rather than the training procedure provides HYDRAVIPER with flexibility in the choice of {\dt} learning algorithm. We use CART \cite{Breiman1984}, but more advanced models such as random forests or mixtures of {\dt}s \cite{Vasic2022} can also be used to improve performance.

\subsection{Training Rollouts: Adaptive Budget Allocation}
\label{sec:hydraviper:training}
\begin{algorithm}[ht]
\caption{Adaptive Training Rollouts (TR-A)}
\label{alg:tr-a}
\begin{algorithmic}[1]
\Statex \textbf{Input:} {\parbox[t]{0.85\linewidth}{%
    Dataset $\mathcal{D}$, students $\hat{\pi}^{m-1}$, iteration $m$, \\
    per-iteration rollout count $K_{\mathrm{train}}$, training \\
    rollout budget $B_{\mathrm{train}}$, dropped rollout count \\
    $K_{\mathrm{drop}}$, total rollout count $n_{\mathrm{train}}$
    }
}
\Statex \textbf{Output:} Updated dataset $\mathcal{D}$, total rollout count $n_{\mathrm{train}}$

\State \textbf{Set} $K_{\mathrm{train}}^m \gets \min(K_{\mathrm{drop}}, K_{\mathrm{train}}) \mathds{1}[n_{\mathrm{train}} \leq B_{\mathrm{train}}]$
\State \textbf{Update} $n_{\mathrm{train}} \gets n_{\mathrm{train}} + K_{\mathrm{train}}^m \mathds{1}[m > 1]$

\For{each team $\team_\ell \in \{1, \dots, L\}$}
    \State{\parbox[t]{0.85\linewidth}{%
        \textbf{Collect and relabel} 
        $K_{\mathrm{train}}^m$ rollouts: \\
        $\mathcal{D}_\ell^m \gets \{(\jointob_\ell, \pi_\ell^*(\jointob_\ell)) \sim d^{(\hat{\pi}^{m-1})}\}$
        }
    }
    \State \textbf{Aggregate} dataset: $\mathcal{D}_\ell \gets \mathcal{D}_\ell \cup \mathcal{D}_\ell^m$
\EndFor
\State \Return $\mathcal{D}, n_{\mathrm{train}}$
\end{algorithmic}
\end{algorithm}

Thus far, we have assumed that HYDRAVIPER has access to a dataset of observation-action pairs for training. To collect this dataset, HYDRAVIPER follows the DAGGER-style iterative procedure of collecting a dataset at each iteration $m$ by rolling out the current student policies $\hat{\pi}^{m-1}$ (\Cref{alg:hydraviper}, line~6--7). The next set of students are trained on the aggregate of all collected datasets, therefore building up the set of inputs likely to be encountered by the student policies during execution. However, collecting training rollouts is computationally expensive. Past work has employed an inefficient static allocation strategy that uniformly performs $K_{\textrm{train}}$ rollouts in each iteration. This strategy is problematic because the students are far from convergence early in training, so the distribution of trajectories collected earlier in training potentially diverges from those that converged students would encounter. HYDRAVIPER addresses this challenge through an adaptive rollout strategy that dynamically allocates the training budget at each iteration and prioritises critical states encountered later in training.

Recall that, for each team of cooperative agents $\team_\ell$, HYDRAVIPER follows \Cref{eq:resample_prob} to compute weights $p_{\ell k}$ for resampling the training dataset. We show the following:
\begin{theorem}
    Given a dataset of observation-action pairs for team $\team_\ell$ in iteration $m$ of HYDRAVIPER, $\mathcal{D}_\ell = \{(\jointob_\ell, \jointact_\ell)\}$, assume there exists a pair $(\jointob_{\ell k}, \jointact_{\ell k})$ that receives the weight $p_{\ell k}^{(m)} = 0$. Then, in iteration $m+1$ of HYDRAVIPER, this pair also receives the weight $p_{\ell k}^{(m+1)} = 0$.
\label{thm:rs}
\end{theorem}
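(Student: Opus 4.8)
The plan is to observe that the resampling weight in \Cref{eq:resample_prob} (equivalently, line~1 of \Cref{alg:c-q}) is, for a fixed stored sample, a quantity that does not vary with the iteration index, so the whole statement reduces to checking that the sample is still in the team dataset one iteration later. Concretely, I would first write the weight of the pair explicitly as
\[
p_{\ell k} = \bar{V}^{\pi_\ell^*}(\jointob_{\ell k}) - \min_{\jointact_\ell} \bar{Q}^{\pi_\ell^*}(\jointob_{\ell k}, \jointact_\ell, \pi^*_{-\ell}(\jointob_{-\ell k})),
\]
and read off exactly which objects it depends on: the sample's own observation $\jointob_{\ell k}$ (which determines $\jointob_{-\ell k}$), the expert policy profile $\pi^*$, the expert $Q$-functions $Q^{\pi^*}$, and the team partition $\team_1, \ldots, \team_L$.

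The key step is then to note that every one of these objects is fixed across the run of HYDRAVIPER: $\pi^*$ and $Q^{\pi^*}$ are inputs, the partition is computed once on line~4 of \Cref{alg:hydraviper}, and a stored sample's observation is immutable once collected. In particular, the weight is independent of the iteration $m$, of the current students $\hat{\pi}^{m-1}$, and even of the stored action label $\jointact_{\ell k}$; hence the map from a present sample to its weight is literally the same function in iteration $m+1$ as in iteration $m$. Finally I would check that the sample still belongs to the team's aggregated dataset in iteration $m+1$: the only change to $\mathcal{D}_\ell$ between consecutive iterations is the aggregation on line~5 of \Cref{alg:tr-a}, $\mathcal{D}_\ell \gets \mathcal{D}_\ell \cup \mathcal{D}_\ell^{m+1}$, which is monotone and only adds samples (the resampling on line~3 of \Cref{alg:c-q} produces the throwaway set $\mathcal{D}_\ell'$ but leaves $\mathcal{D}_\ell$ untouched). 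So $(\jointob_{\ell k}, \jointact_{\ell k})$ persists, and re-evaluating the same iteration-independent formula returns the same value $0$.

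I do not expect a genuine obstacle; the statement is essentially a well-definedness observation, and the ``hard part'' is only bookkeeping precision. The two things worth being careful about are: (i) making explicit that $p_{\ell k}$ here denotes the \emph{unnormalised} weight of line~1 of \Cref{alg:c-q} (the one compared against $\epsilon$ in the $K_{\mathrm{drop}}$ computation), so that re-normalisation over a growing dataset cannot interfere — and if one instead reads $p_{\ell k}$ as the resampling probability of line~3, noting that dividing by a positive sum still sends $0 \mapsto 0$; and (ii) being precise that ``this pair'' refers to a specific stored copy of the sample, which is exactly what the monotone-aggregation argument tracks. For context I would also include the one-line remark that $p_{\ell k} \ge 0$ always, since $\jointact_\ell = \pi_\ell^*(\jointob_{\ell k})$ is among the minimised team actions and $\bar{Q}^{\pi_\ell^*}(\jointob_{\ell k}, \pi^*(\jointob_{\ell k})) = \bar{V}^{\pi_\ell^*}(\jointob_{\ell k})$, so $p_{\ell k}^{(m)} = 0$ is the statement that every team action attains the minimum $\bar{Q}$-value at $\jointob_{\ell k}$ — a property that, again, depends only on $\jointob_{\ell k}$, $\pi^*$, $Q^{\pi^*}$, and $\team_\ell$, none of which move between iterations.
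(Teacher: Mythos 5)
Your proposal is correct and rests on exactly the same observation as the paper's proof: the weight formula depends only on the expert policy, the expert $Q$-functions, and the stored observation, none of which change between iterations, while the dataset aggregation is monotone so the pair persists. The only difference is stylistic --- the paper argues by contradiction (assuming $p_{\ell k}^{(m+1)} \ne 0$) whereas you argue directly, and your added remark that $p_{\ell k} \ge 0$ because the expert action attains $\bar{V}^{\pi_\ell^*}$ cleanly removes the paper's ``without loss of generality'' step.
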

\begin{proof}
    See \Cref{sec:app:rs-proof}. \phantom\qedhere
\end{proof}
As a result, samples $(\jointob_{\ell k}, \jointact_{\ell k})$ with $p_{\ell k} = 0$ are effectively \emph{removed} from the dataset $\mathcal{D}$. This intuition serves as the motivation behind HYDRAVIPER's adaptive training rollout budget allocation (\Cref{alg:tr-a}): after samples are dropped during the resampling procedure, HYDRAVIPER performs rollouts to replenish the dataset. 

Specifically, we treat the first iteration as a warm-up period, in which the experts collect a predefined number of $K_{\mathrm{train}}$ rollouts (\Cref{alg:tr-a}, lines~3--5). This leads to an initial dataset of $T \cdot K_{\mathrm{train}}$ observation-action pairs. Each team $\team_\ell$ discards non-critical samples from its dataset (\Cref{alg:c-q}, line~2), i.e. those where the range in the $Q$-value is at most a predefined threshold $\epsilon$. With the goal of efficiency in mind, HYDRAVIPER computes the minimum number of such discarded samples across all teams of cooperative agents. This then determines the minimum number of rollouts required to collect at least this many samples in the next iteration. The expected number of dropped rollouts, and therefore the budget for the next iteration, is: 
\begin{align*}
    K_{\mathrm{drop}} = \min_\ell \ceil{\frac{1}{T} \left|\{(\jointob_k, \jointact_k) \in \mathcal{D}_\ell \mid p_{\ell k} \leq \epsilon\}\right|}.
\end{align*}
During the remaining $M - 1$ iterations, HYDRAVIPER continues to collect rollouts using students until it exhausts its total budget of $B_{\mathrm{train}}$ training rollouts (\Cref{alg:tr-a}, line~1). Choosing different rollout budgets allows performance and efficiency to be traded off. A higher budget is likely to lead to superior performance, as more rollouts will be collected from students closer to convergence before the budget is exhausted, but it also requires more computation time.

\subsection{Validation Rollouts: UCB Policy Selection}
\label{sec:hydraviper:validation}
\begin{algorithm}[t]
\caption{UCB Validation Rollouts (VR-UCB)}
\label{alg:vr-ucb}
\begin{algorithmic}[1]
\Statex \textbf{Input:} {\parbox[t]{0.85\linewidth}{%
    Policies $\{\hat{\pi}^m_\ell\}_{m=1}^M$, team $\team_\ell$, validation \\
    rollout budget $B_{\mathrm{valid}}$, scaling factor $c$
    }
}
\Statex \textbf{Output:} Selected policies $\hat{\pi}_i, \forall i \in \team_\ell$

\State \textbf{Initialise} $n_m \gets 0$ for all $m \in \{1, \dots, M\}$
\State \textbf{Initialise} $n_{\min} \gets \ceil{2 \ln B_{\mathrm{valid}}}$
\State{\parbox[t]{0.85\linewidth}{%
    \textbf{Initialise} return estimates: \\
    $\mu_\ell^m \gets \frac{1}{C_{min}} \sum_{k=1}^{C_{min}} \meanreturn_{\ell k}, \meanreturn_{\ell k} \sim d^{(\hat{\pi}^m_\ell, \pi^*_{-\ell})}$
    }
}

\For{rollout $k \in \{1, \ldots, (B_{valid} - m n_{\min})\}$}
    \State \textbf{Set} $m^* \gets \argmax_m \hat{\mu}_\ell^m + \sqrt{\frac{c \ln B_{\mathrm{valid}}}{n_m}}$
    \State \textbf{Collect} mean return: $\meanreturn_{\ell k} \sim d^{(\hat{\pi}^{m^*}_\ell, \pi^*_{-\ell})}$
    \State \textbf{Update} rollout count: $n_{m^*} \gets n_{m^*} + 1$
    \State{\parbox[t]{0.85\linewidth}{%
        \textbf{Update} running average of mean return: \\
        $\hat{\mu}_\ell^{m^*} \gets \frac{n_{m^*} - 1}{n_{m^*}} \hat{\mu}_\ell^{m^*} + \frac{1}{n_{m^*}} \meanreturn_{\ell k}$
        }
    } 
\EndFor

\State \Return $\hat{\pi}_\ell^{m^*}, \forall i \in \team_\ell$
\end{algorithmic}
\end{algorithm}

Following $M$ iterations, HYDRAVIPER produces $M$ joint policy profiles for each team. It then must select the best-performing policy profile (\Cref{alg:hydraviper}, lines~13--14). HYDRAVIPER iterates through the policy profiles to estimate the team performance of each using a set of validation rollouts. The performance metric it uses is the undiscounted mean return of the team, $\meanreturn^m_\ell = \frac{1}{T} \sum_{t=0}^T \rewardfunc_\ell(s_t, \hat{\pi}^m_\ell(\jointob_t))$.

As is the case for training, collecting validation rollouts is computationally intensive, so these rollouts also need to be efficiently allocated. However, the problem setting differs here. Our goal is not to collect a \emph{diverse} set of training rollouts, but rather to identify the \emph{most performant} policy profiles using as few rollouts as possible. The mean return of each policy profile is unknown \emph{a priori}; it must be estimated by selecting policy profiles and performing rollouts with noisy returns. Again, a fixed allocation strategy of $K_{\mathrm{valid}}$ environment rollouts for each policy profile is wasteful. The rollouts assigned to clearly poorly performing policy profiles could be reallocated to reduce the variance in the estimated returns of promising policy profiles. This motivation aligns with that of multi-armed bandit (MAB) problems.

Given a limited budget of $B_{\mathrm{valid}}$ rollouts, we represent the task of selecting the best-performing policy profile as a MAB problem. For each team $\team_\ell$, the policy profile $\hat{\pi}_\ell^m$ from each iteration $m$ is an arm, and its return is a random variable $\meanreturn_\ell^m$ with unknown mean $\mu_\ell^m$. Each rollout samples from one such random variable, which captures the distribution of returns from environment and policy randomness. The objective is to identify the best arm $m_\ell^* = \argmax_m \mu_\ell^m$ in as few rollouts as possible, i.e. to minimise the regret with respect to the policy that selects $m_\ell^*$ for every rollout. 

In this work, we use a modification of the \textsc{UCB1} algorithm \cite{Auer2002}. This allows us to achieve logarithmic regret given a readily satisfiable assumption: that the returns $\meanreturn_\ell^m$ of the arms are bounded (see \Cref{sec:app:ucb}). Given a total budget of $B_{\mathrm{valid}}$ validation rollouts, HYDRAVIPER performs them as follows. For each policy profile, it first performs $n_{\min} = \ceil{2 \ln B_{\mathrm{valid}}}$ rollouts to generate initial estimates of the mean returns (\Cref{alg:vr-ucb}, lines~2--3). To allocate the remainder of the budget (lines~4--8), HYDRAVIPER follows \textsc{UCB1} to select the policy profile index for the $k$th validation rollout as
\begin{align*}
    m_{\ell k}^* = \argmax_m \left(\hat{\mu}_\ell^m(k) + \sqrt{\frac{c \ln B_{\mathrm{valid}}}{n_m(k)}}\right),
\end{align*}
where $n_m(k) = \sum_{k'=1}^k \mathds{1}[m_{\ell k'}^* = m]$ is the number of rollouts that have used policy profile $m$ thus far, $\hat{\mu}_\ell^m(k) = \frac{\sum_{k'=1}^k \meanreturn_{\ell k'} \mathds{1}[m_{\ell k'}^* = m]}{n_m(k)}$ is the empirical mean of the returns $\meanreturn_{\ell k}$ from policy profile $m$, $B_{\mathrm{valid}}$ is the total budget of rollouts, and $c$ is a scaling constant for the confidence bound (see \Cref{sec:exp:hyp}). HYDRAVIPER maintains a running average for the mean return of each policy profile, which it updates using the mean return $\meanreturn_{\ell k}$ of each rollout (line 8).

\subsection{Agent Clustering: Scaling Up HYDRAVIPER}
\label{sec:hydraviper:clustering}
When resampling the dataset, HYDRAVIPER calculates sample weights following \Cref{eq:resample_prob}. This computation requires enumerating joint actions $\jointact_\ell$ for each team $\team_\ell$, in order to find the worst-case joint action that minimises the team's mean $Q$-function, $\min_{\jointact_\ell} \bar{Q}^{\pi_\ell^*}(\jointob_k, \jointact_\ell, \pi^*_{-\ell}(\jointob_{-\ell k}))$. The complexity of this step scales with the size of the joint action space and thus exponentially with the size of the team. Some mixed competitive-cooperative environments (see \Cref{sec:exp:envs}) have an inherent team structure that can reduce this complexity. In cooperative environments such as TSC, HYDRAVIPER clusters the agent set into teams to improve training efficiency (\Cref{alg:tr-a}).

\begin{algorithm}[t]
\caption{Agent Graph Clustering (Partition)}
\label{alg:partition}
\begin{algorithmic}[1]
\Statex \textbf{Input:} {\parbox[t]{0.85\linewidth}{%
    Markov game $(\statespace, \actspace, \transitionprob, \rewardfunc_\aindex, \obsfunc_\aindex)$, experts $\pi^*$,  \\
    agent distance function $d$
    }
}
\Statex \textbf{Output:} Agent teams $\team_1, \ldots, \team_L$

\State \textbf{Construct} graph $G = (V = \{1, \ldots, N\}, E, w = 0)$

\For{each agent $i \in \{1, \dots, N\}$}
    \For{each agent $j \in \{1, \dots, N\}$}
        \State \textbf{Assign} edge weight $w_{ij} \gets \frac{1}{d(i, j)}$
    \EndFor
\EndFor

\State \textbf{Partition} graph $\team_1, \ldots, \team_L \gets \textbf{METIS}(G, L)$

\State \Return $\team_1, \ldots, \team_L$
\end{algorithmic}
\end{algorithm}

Our goal is to find a clustering of the agent set into teams $\team_1 \ldots \team_L$ so that HYDRAVIPER-trained {\dt} students have \emph{performance} similar to those trained on the full agent set, but improved \emph{scalability} in that the number of actions to enumerate per team is much smaller than the full agent set: $\prod_{i \in \team_\ell} \actspace_i \ll \prod_{i \in \{1, \ldots, N\}} \actspace_i$. We leverage the intuition that agents distant from each other (in terms of environmental distance, trajectory similarity, or other metrics) are unlikely to be influential on each other in most environments. 

Suppose that we are given a function $d(i, j)$ that computes this distance between a pair of agents. In our clustering procedure (\Cref{alg:partition}), we first construct a complete graph $G = (V, E) = K_N$ where the nodes represent agents, and the weight between node $i$ and node $j$, $w_{ij}$, is inversely proportional to $d(i, j)$ (lines 1--4). Then, we perform \emph{graph partitioning} to divide $G$ into $L$ contiguous, connected node clusters of approximately equal size (line 5), such that the sum of the weights of inter-cluster edges is minimised. We use the hierarchical METIS algorithm \cite{Karypis1998} to accomplish this. Note that we solve a graph partitioning problem instead of a min-cut problem to prevent the clusters from being imbalanced. Otherwise, in the worst case, the largest cluster could have size $O(N)$, thus yielding minimal gains in scalability.

How can the distance metric $d$ be defined? For an environment that has an inherent team structure (such as the physical deception environment in \Cref{sec:exp:envs}), we define the graph $G$ as a complete subgraph for each team. For traffic signal control environments, we note that the road network inherently forms a graph $G_{env}$, which can be partitioned to obtain sets of spatially proximal agents that correspond to neighbouring intersections. In this case, $G = G_{env}$.

Environment-agnostic distance metrics can also be designed. We consider a measure of proximity that aligns with the VIPER framework: the influence of agents on each other's $Q$-values. Recall from \Cref{sec:hydraviper:resampling} that HYDRAVIPER measures the importance of samples by the mean $Q$-function $\bar{Q}^{\pi_\ell^*}$. In a cooperative setting, if one agent's actions have a significant impact on another agent's $Q$-values, then the joint actions of these agents are likely to have a significant impact on the overall agent set's mean $Q$-function. Including both agents on the same team would allow HYDRAVIPER to capture the effects of these joint actions. We define $G$ using the distance function
\begin{align*}
    d(i, j) &= \mathbb{E}_{(\jointob, \jointact) \in \mathcal{D}_\textrm{dist}} \frac{2}{\delta_{ij} + \delta_{ji}}, \\
    \delta_{ij} &= Q^{\pi_j^*}(\jointob, \pi^*(\jointob)) - \min_{a_i} Q^{\pi_j^*}(\jointob, a_i, \pi^*_{-i}(\jointob_{-i})),
\end{align*}
where $\delta_{ij}$ is the range in agent $j$'s $Q$-values induced by agent $i$, and the distance $d_{ij}$ is the inverse of the average of $\delta_{ij}$ and $\delta_{ji}$. This symmetrisation of influence is a simplifying assumption to obtain a single weight for each edge. Alternative distance metrics could be designed to better capture agent pairs where one is significantly more influential than the other. Since METIS requires integral edge weights, we rescale $\delta_{ij}$ to percentiles between $\min_{i,j} \delta_{ij}$ and $\max_{i,j} \delta_{ij}$.

\section{Experiments}
\label{sec:exp}
Now, we demonstrate the utility of HYDRAVIPER for interpretable MARL using experiments in various benchmark environments. In doing so, we perform a functionally grounded evaluation of interpretability \cite{DoshiVelez2017}, where we assess the quality of the generated {\dt}s in terms of \emph{performance} and \emph{computational efficiency}. As the {\dt}s would be used directly in place of NN-based policies in deployment, we consider these to be good proxy metrics for their practical applicability. More specifically, we address the following research questions:
\begin{rsq}
    Is HYDRAVIPER both performant and efficient (in terms of environment interactions and runtime)?
    \label{rsq:tradeoff}
\end{rsq}
\begin{rsq}
    Does HYDRAVIPER maintain performance optimality as the environment interaction budget decreases?
    \label{rsq:budget}
\end{rsq}
\begin{rsq}
    Can HYDRAVIPER maintain performance optimality while scalability is improved through agent clustering?
    \label{rsq:scale}
\end{rsq}

\subsection{Environments}
\label{sec:exp:envs}
We evaluate HYDRAVIPER in four environments: two environments in the \emph{multi-agent particle world} (MPE) benchmark \cite{Lowe2017}, and two \emph{traffic signal control} (TSC) environments in the RESCO benchmark \cite{Ault2021}. In MPE environments, agents must navigate in a 2D space to achieve a coordinated objective, making these environments ideal for assessing coordination capabilities.

\paragraph{Cooperative navigation (CN)} In this environment, a team of three agents must coordinate to split up and cover three different targets while avoiding collisions with each other.

\paragraph{Physical deception (PD)} In this environment, a team of two defender agents must cooperate to protect two targets from an adversary agent. One of the two targets is the ``goal'' of the adversary; this is not known to the adversary, which can only observe the positions of the targets and defenders. We train the two defender agents against an NN adversary.

In TSC environments, each agent controls a single intersection by selecting different signal phases; each phase allows vehicles from a subset of lanes to pass through the intersection. Both environments are based on real-world road corridors reproduced in the traffic simulator SUMO \cite{AlvarezLopez2018}. To interface with the simulator, we use the OpenAI Gym-style wrapper \texttt{sumo-rl} \cite{Alegre2021}. We focus on imitating experts for all agents as a team.

\paragraph{Cologne corridor (CC)} \cite{Uppoor2011} This environment simulates three signalised intersections in a corridor from the city of Cologne (K\"oln), Germany. It has a total volume of 4\,494 vehicles in 7--8 am rush hour traffic. 

\paragraph{Ingolstadt corridor (IC)} \cite{Lobo2020} This larger environment simulates seven signalised intersections in a corridor from the city of Ingolstadt, Germany. It has a lower total volume of 3\,031 vehicles in 4--5 pm rush hour traffic.

\subsection{Baselines and Setup}
\label{sec:exp:baselines}
We compare HYDRAVIPER with IVIPER and MAVIPER, which represent the state of the art in interpretable multi-agent RL with {\dt} surrogate policies. In addition, we compare with \emph{expert} policies --- MADDPG \cite{Lowe2017} for MPE and MPLight \cite{Chen2020} for TSC --- and an additional baseline, \emph{imitation {\dt}}. Imitation {\dt} does not use students to collect rollouts, nor does it perform dataset resampling; it collects the same number of training rollouts as the other algorithms and trains {\dt}s on the collected dataset. As imitation {\dt} performs worse than the other algorithms by a wide margin, we do not include it in \Cref{tab:perf} or \Cref{fig:rollouts} but show its performance in \Cref{sec:app:full-perf}.

For MPE environments, we use a horizon of 25 timesteps per episode, and we trained MADDPG for 60\,000 episodes as the expert for the {\dt} students to imitate. For TSC environments, we use a horizon of 125 timesteps per episode (each timestep represents 20 seconds of simulation time), and we trained MPLight for 500 episodes as the expert. All imitation learning algorithms were run for 100 iterations to produce {\dt}s with a maximum depth of 4. IVIPER and MAVIPER ran $K_{\mathrm{train}} = K_{\mathrm{valid}} = 50$ training and validation rollouts per iteration for MPE (including for the initial iteration where rollouts are collected by the experts), and 10 rollouts per iteration for TSC. Imitation DT ran the same number of training rollouts. We set these to equalise the number of environment interactions per iteration.

We repeated all experiments 10 times with different random seeds, and we report the mean and 95\% confidence interval of the reward over 10 rollouts performed with the final student policy profiles generated from these runs. Most experiments were run in parallel on a server with 56 2.75GHz AMD EPYC 7453 processors and 252 GiB of RAM. For these experiments, we report the \emph{number of rollouts collected}, not \emph{runtime}s, as the rollout time is roughly constant. However, we also report runtimes for the execution of IVIPER, MAVIPER, and HYDRAVIPER on all four environments. For these experiments, we use the \texttt{kernprof} profiler (v4.1.3) to run them in sequence, with no other concurrent processes running except system routines. These experiments were run on another server with 8 4.2GHz Intel i7-7700K processors and 62 GiB of RAM. 

\subsection{Results}
\label{sec:exp:results}
\begin{table*}[t]
    \centering
    \resizebox{\linewidth}{!}{
    \begin{tabularx}{1.075\linewidth}{XXccccc}
        \toprule
        \multicolumn{2}{l}{\textbf{Environment}}                                 & \textbf{Expert}   & \textbf{IVIPER}       & \textbf{MAVIPER}      & \textbf{HYDRAVIPER}   & \textbf{HYDRAVIPER LB} \\
        \midrule
        \textbf{Cooperative Navigation}      & \textit{Total Penalty}            & 122.67 $\pm$ 1.67 &    160.87 $\pm$ 4.31  &    144.35 $\pm$ 2.12  &    144.48 $\pm$ 2.67  & 144.84 $\pm$ 2.12        \\
                                             & \textit{Runtime (s)}              &         N/A       &  2\,444.6 $\pm$ 9.1  & 21\,188.7 $\pm$ 408.6 &    206.2 $\pm$ 11.1  &  180.5 $\pm$ 9.3         \\
        \midrule                                                                                                                                                            
        \textbf{Physical \newline Deception} & \textit{Defender \newline Reward} &   8.19 $\pm$ 0.50 &      6.94 $\pm$ 0.52  &      7.74 $\pm$ 0.82  &      7.72 $\pm$ 0.53  &   7.12 $\pm$ 0.84        \\
                                             & \textit{Runtime (s)}              &         N/A       &  2\,017.2 $\pm$ 21.3  & 11\,782.4 $\pm$ 137.8 &    1\,173.5 $\pm$ 21.6  &  388.4 $\pm$ 5.6         \\
        \midrule
        \textbf{Cologne Corridor}            & \textit{Queue Length}             &  18.94 $\pm$ 2.49 &     22.06 $\pm$ 2.91  &     25.85 $\pm$ 5.22  &     16.72 $\pm$ 1.51  &  18.77 $\pm$ 3.69        \\
                                             & \textit{Runtime (s)}              &         N/A       & 33\,841.4 $\pm$ 441.3 & 37\,503.8 $\pm$ 834.8 & 13\,651.6 $\pm$ 254.2 & 1\,865.4 $\pm$ 26.1        \\
        \midrule
        \textbf{Ingolstadt Corridor}         & \textit{Queue Length}             &  23.01 $\pm$ 1.10 &     21.51 $\pm$ 2.13  &     24.26 $\pm$ 2.54  &     19.77 $\pm$ 1.51  &  21.87 $\pm$ 1.59        \\
                                             & \textit{Runtime (s)}              &         N/A       & 75\,927.5 $\pm$ 203.3 & 58\,676.4 $\pm$ 1\,392.3 & 11\,263.8 $\pm$ 55.8 & 6\,462.4 $\pm$ 30.9        \\
        \bottomrule
    \end{tabularx}
    }
    
    \caption{Performance and runtimes (means and 95\% confidence intervals) for HYDRAVIPER and baselines. All algorithms are given the same environment interaction budget, except for low budget (LB) HYDRAVIPER (which uses 20\% of the rollouts for MPE, 10\% of the rollouts for TSC). HYDRAVIPER achieves or exceeds the performance of MAVIPER using a fraction of the runtime, and still performs well in the low budget setting. For physical deception, higher rewards are better; for all other environments, lower rewards are better. \Cref{sec:app:full-rt} shows runtimes for individual algorithm steps.}
    \label{tab:perf}
\end{table*}

\paragraph{\Cref{rsq:tradeoff}} \textbf{HYDRAVIPER achieves strong, coordinated performance in a computationally efficient manner.}
First, we assess HYDRAVIPER's performance as we vary it between two environment interaction budget levels, high (5\,000 training/5\,000 validation rollouts for MPE, 1\,000 training/1\,000 validation rollouts for TSC) and low (500 training/1\,500 validation rollouts for MPE, 100 training/100 validation rollouts for TSC). As shown in \Cref{tab:perf}, HYDRAVIPER students perform better than or comparable to students trained by the most performant {\dt} baseline (MAVIPER for MPE, IVIPER for TSC) in all environments at both budget levels. HYDRAVIPER's performance is also better than or comparable to the NN experts for all environments except cooperative navigation, in which all DT-based algorithms cannot achieve expert-level performance. In physical deception, although neither MAVIPER nor HYDRAVIPER substantially outperforms IVIPER given the considerable stochasticity in the environment, HYDRAVIPER achieves a level of performance much closer to MAVIPER, while its training time is an order of magnitude shorter than MAVIPER. 

In TSC environments, HYDRAVIPER is the best performing algorithm at both the high and low interaction budget levels. Notably, HYDRAVIPER at the high budget level substantially outperforms the expert on the Ingolstadt corridor. By contrast, MAVIPER fails to coordinate the intersection agents and is in general the worst-performing algorithm. HYDRAVIPER more than halves the runtime of both IVIPER and MAVIPER on both TSC environments. 

In \Cref{fig:example-dt}, we show a decision tree generated by HYDRAVIPER for one agent in the Ingolstadt corridor at the low budget level (100 training/100 validation rollouts). Even with this limited environment interaction budget, HYDRAVIPER generates {\dt}s that are not only performant, but also intuitively sensible. Each internal node of the {\dt} compares the number of queueing vehicles for some turning movement to a threshold; the left branch includes all samples where there are fewer vehicles than the threshold, while the right branch includes all samples where there are more than the threshold. The root assesses traffic from the westbound side street. If it is low, the {\dt} coordinates north-south traffic on the main road; if it is high, the {\dt} coordinates turn traffic from the side street. Stakeholders, such as traffic engineers, can follow such a workflow to visualise, reason about, and supervise {\dt}s generated by HYDRAVIPER. 

\begin{figure*}[ht]
    \centering
    \includegraphics[trim={0cm 1cm 1cm 0cm},clip,width=0.685\textwidth]{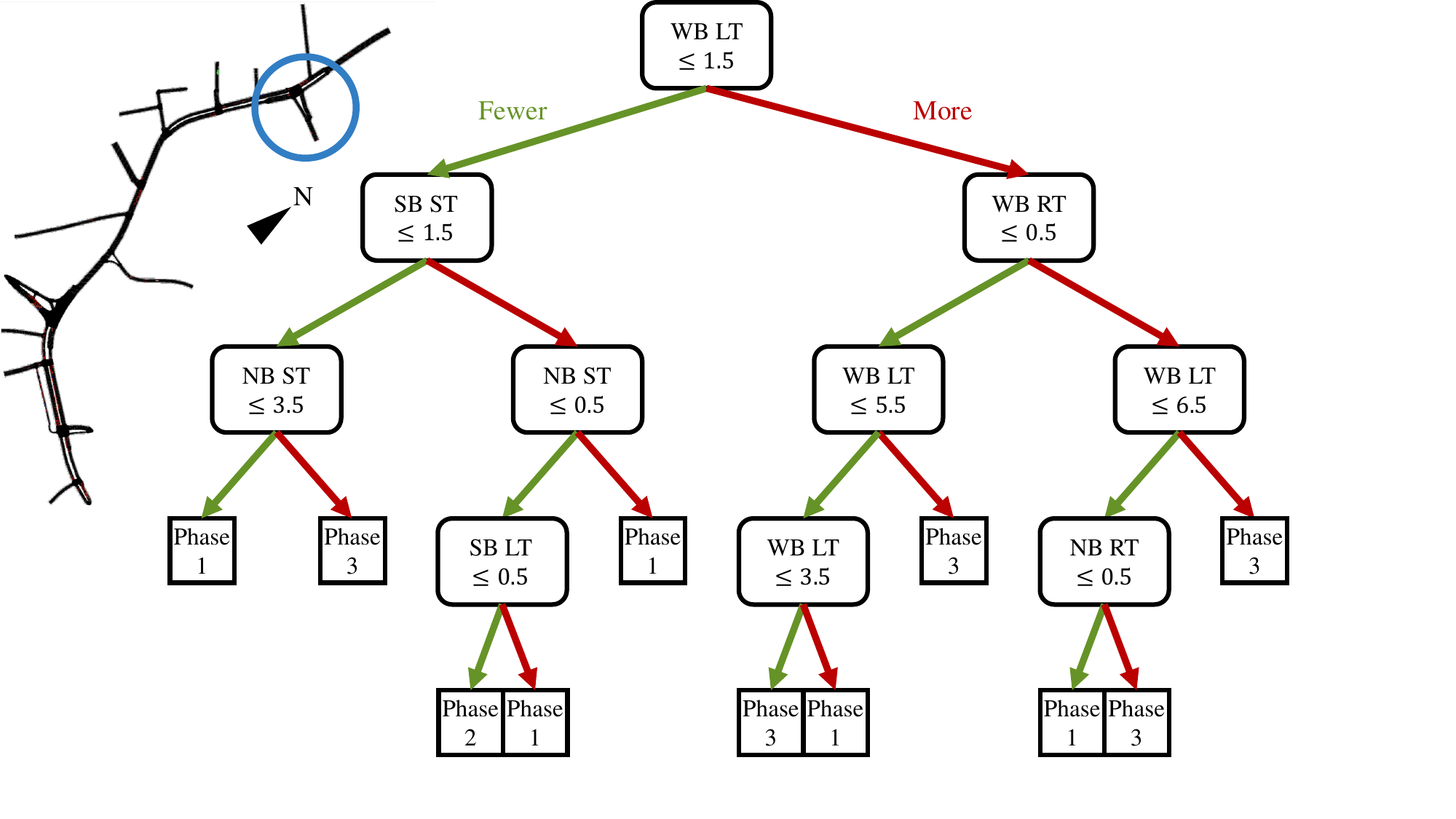}
    \caption{Decision tree generated by HYDRAVIPER (100 training/100 validation rollouts) for the Ingolstadt corridor (IC).}
    \label{fig:example-dt}
\end{figure*}

\begin{figure*}[ht]
    \centering
    \includegraphics[width=\textwidth]{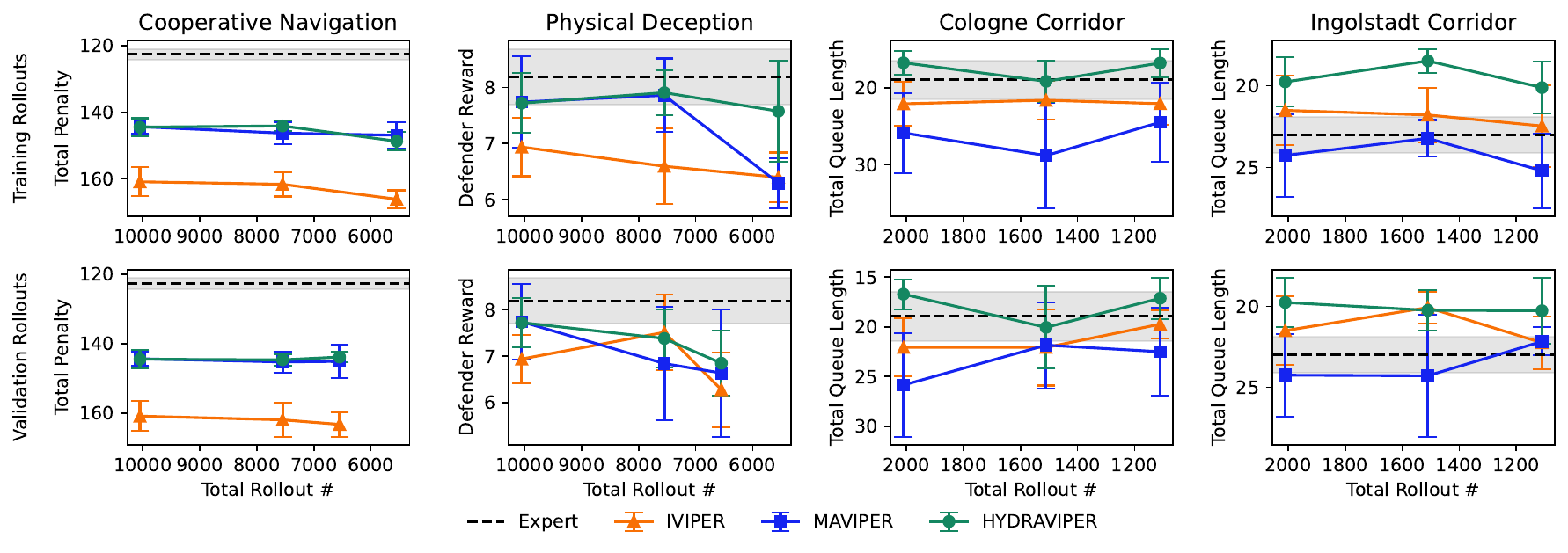}
    \caption{Performance of HYDRAVIPER and baselines as the number of rollouts decreases. Top shows decreasing training rollouts; bottom shows decreasing validation rollouts. HYDRAVIPER's performance stays consistent as the number of rollouts decreases. For physical deception, higher rewards are better; for all other environments, lower rewards are better. Bars show 95\% confidence intervals based on 10 randomly-seeded runs. Full results are shown in \Cref{tab:full-perf} in \Cref{sec:app:full-perf}.
    }
    \label{fig:rollouts}
\end{figure*}

\paragraph{\Cref{rsq:budget}} \textbf{As the environment interaction budget decreases, HYDRAVIPER still outperforms baselines.}
Now, we investigate the ability of HYDRAVIPER to adapt to increasing budget constraints for environment interaction, as would be imposed by users who wish to quickly iterate on {\dt} policy training. As shown in \Cref{fig:rollouts}, HYDRAVIPER's performance in all four environments does not change substantially as the training and validation rollout budgets are individually reduced. Furthermore, in all four environments, HYDRAVIPER achieves performance on par with or better than MAVIPER at all budget levels. Therefore, HYDRAVIPER is able to maintain a Pareto frontier in the tradeoff between performance and computational efficiency.

In cooperative navigation, the performance of both HYDRAVIPER and MAVIPER remains similar as the training and validation budgets are reduced individually. However, when both budgets are reduced simultaneously (shown in \Cref{fig:ablation-cn-cc}), the performance of HYDRAVIPER but not MAVIPER remains essentially unchanged. In physical deception, HYDRAVIPER still performs well even as its training budget is reduced by a factor of 10, whereas MAVIPER performs substantially worse. Furthermore, the 95\% confidence intervals of HYDRAVIPER's rewards are smaller than those of MAVIPER at all validation budget levels. Thus, HYDRAVIPER is able to identify performant policy profiles more consistently than MAVIPER.

In the Cologne corridor, HYDRAVIPER's performance consistently remains within the expert's 95\% confidence interval at all environment interaction budget levels, whereas the same is not true of MAVIPER. Meanwhile, the performance of HYDRAVIPER on the Ingolstadt corridor substantially exceeds the expert at all budget levels, whereas MAVIPER and IVIPER (except for the 500 validation rollout setting) remain in the expert's 95\% confidence interval.

\paragraph{\Cref{rsq:scale}} \textbf{Even when the agent set is decomposed through clustering, HYDRAVIPER maintains its performance.}
Finally, we evaluate the effect of agent clustering on the performance of HYDRAVIPER. For the Ingolstadt corridor environment in the high budget setting (1\,000 training/1\,000 validation rollouts), we evaluate two strategies from \Cref{sec:hydraviper:clustering}: (1) clustering the agent set into two teams based on the road network graph $G_{env}$ (\texttt{graph-metis}), and (2) using pairwise $Q$ values to identify mutually impactful agents, and either $k$-means clustering (\texttt{marginal-kmeans}) or METIS (\texttt{marginal-metis}) for partitioning. 

As shown in \Cref{fig:split}, these clustering strategies allow HYDRAVIPER to retain its performance even when the size of the agent set is approximately halved for each team. Clustering reduces the largest team's joint action set in size from 1\,944 to 72, and the total runtime of HYDRAVIPER by up to 47\%. The best-performing strategy combines pairwise $Q$-value weights with METIS for partitioning, but they all perform similarly to the unpartitioned algorithm.

These clustering methods also outperform two baselines. First, the \texttt{random} baseline randomly assigns each agent to one of two teams; this baseline has very high variance in performance. Second, the \texttt{contiguous} baseline uses a handcrafted division of the agent set into two subsets; \texttt{graph-metis} recovers this division automatically, but \texttt{marginal-metis} improves further by grouping agents that are not adjacent in the environment.

\begin{figure*}[ht]
    \centering
    \small
    \begin{subfigure}[ht]{.4\linewidth}
    \includegraphics[width=\linewidth]{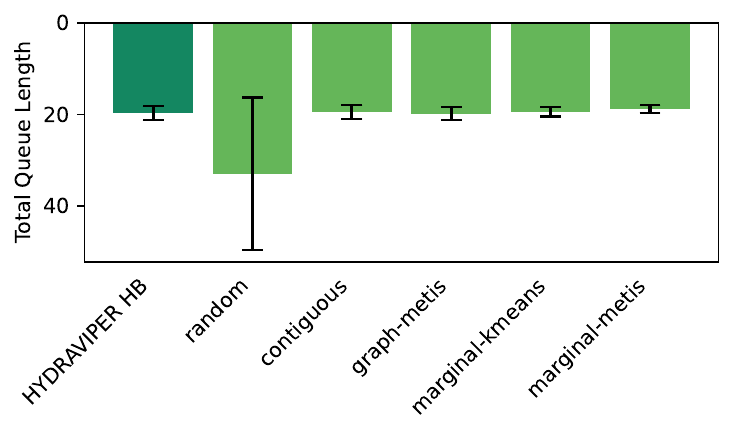}
    \end{subfigure}
    \begin{subfigure}[ht]{.575\linewidth}
    \centering
    \begin{tabularx}{0.8\linewidth}{llX}
        \toprule
         \textbf{Clustering Method} & \textbf{Runtime (s)} & \textbf{Worst \newline Clustering} \\
         \midrule
         HYDRAVIPER HB & 11\,263.8 $\pm$ 55.8 & [[1,2,3,4,5,6,7]] \\
         \midrule
         + random & 6\,925.5 $\pm$ 897.0 & [[1,2,3,6,7],[4,5]] \\
         \midrule
         + contiguous & 6\,731.3 $\pm$ 214.3 & [[1,2,3],[4,5,6,7]] \\
         + graph-metis & 6\,026.4 $\pm$ 39.1 & [[1,2,3],[4,5,6,7]] \\
         \midrule
         + marginal-kmeans & 8\,538.0 $\pm$ 643.0 & [[1,2,3,4,6,7],[5]] \\
         + marginal-metis & 6\,067.5 $\pm$ 57.0 & [[1,2,4,7],[3,5,6]] \\
         \bottomrule
    \end{tabularx}
    \end{subfigure}
    
    \caption{(L) Performance of HYDRAVIPER on the Ingolstadt corridor (IC) under different agent set clustering methods at the high budget (HB) level. (R) Runtimes and worst-performing clusterings across 10 different random seeds of HYDRAVIPER under these methods. The intersection agent numbers follow those shown in \Cref{fig:algorithm}.}
    \label{fig:split}
\end{figure*}

\subsection{Hyperparameter Sensitivity}
\label{sec:exp:hyp}
To understand the effects of HYDRAVIPER's hyperparameters on its performance, we conduct experiments to vary the depth of the {\dt} students, and the scaling constant $c$ for UCB policy selection (\Cref{sec:hydraviper:validation}), on the cooperative navigation environment. We choose this environment due to its relatively low level of randomness. For these experiments, we use HYDRAVIPER at the low budget level (500 training/1\,500 validation rollouts) as the baseline algorithm, and fix all hyperparameters other than those of interest. Figures for all results are shown in \Cref{sec:app:full-hyp}.

By default, we use a {\dt} depth of 4; our results show that {\dt}s of this depth provide a good tradeoff between expressiveness and computational efficiency. Depth-4 {\dt}s outperform depth-2 and depth-3 {\dt}s on cooperative navigation. This is an intuitive result; the optimal agent policy for this environment cannot be represented with such shallow {\dt}s, as they must condition on the positions of the other agents and the landmarks. However, we find that depth-4 {\dt}s also marginally outperform depth-5 {\dt}s. This same pattern exists in all of the environments that we use for evaluation. We hypothesise that the amount of data collected by HYDRAVIPER at the low budget level is insufficient to coordinate between depth-5 {\dt}s.

Our default value for $c$ is also 4. As discussed in \Cref{sec:app:ucb}, this value is smaller than would be necessary according to a theoretical analysis. Since the agents are already trained, we hypothesise that the potential range of returns is less useful in practice for finding a good policy profile than the typical range of returns. The best {\dt} depths for the other environments are all 4, as with cooperative navigation, while the best tested values of $c$ for physical deception and the Cologne corridor are, respectively, 2 and 16.

\begin{figure*}[t]
    \centering
    \begin{subfigure}[ht]{.45\linewidth}
    \includegraphics[width=\linewidth]{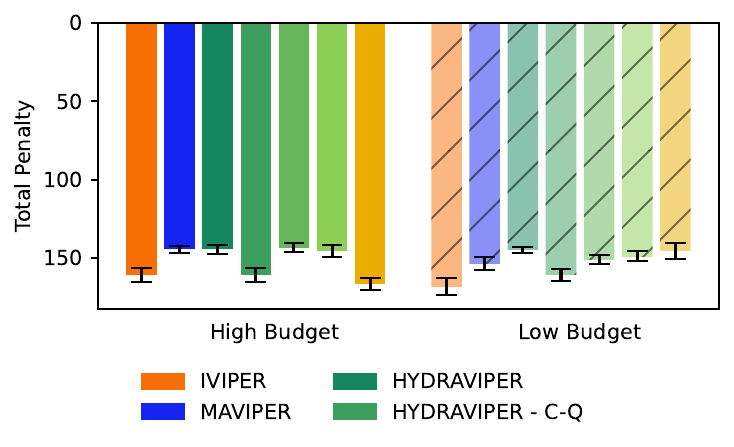}
    \end{subfigure}
    \begin{subfigure}[ht]{.45\linewidth}
    \includegraphics[width=\linewidth]{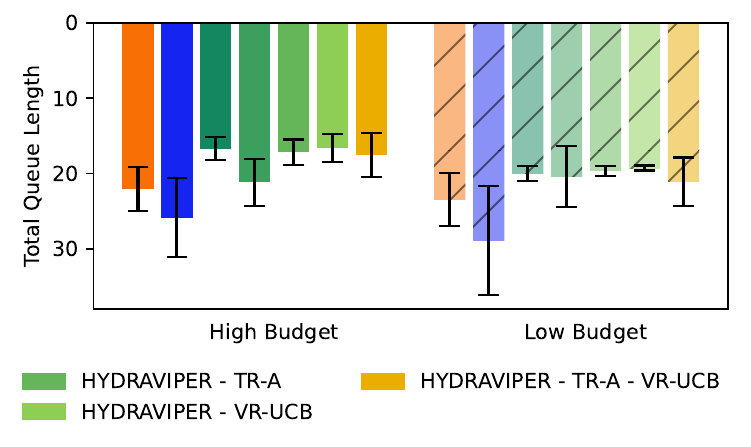}
    \end{subfigure}
    \caption{Ablation on cooperative navigation (CN, L) and Cologne corridor (CC, R). Lower rewards are better.}
    \label{fig:ablation-cn-cc}
\end{figure*}

\subsection{Ablation}
\label{sec:exp:ablation}
Lastly, to understand which components of HYDRAVIPER are responsible for its success, we conduct an ablation study for HYDRAVIPER at two budget levels in the cooperative navigation (CN) and Cologne corridor (CC) environments. For the high budget level, we use 5\,000 training/5\,000 validation rollouts for CN, and 1\,000 training/1\,000 validation rollouts for CC; for the low budget level, we use 500 training/1\,500 validation rollouts for CN, and 100 training/100 validation rollouts for CC. We compare HYDRAVIPER's centralised-$Q$ resampling with IVIPER's independent resampling (HYDRAVIPER - CQ). In addition, we study the impact of removing adaptive training budget allocation (HYDRAVIPER - TR-A) and UCB-based validation budget allocation (HYDRAVIPER - VR-UCB). Results for physical deception and the Ingolstadt corridor are shown in \Cref{sec:app:full-ablation}.

\Cref{fig:ablation-cn-cc} shows our ablation results. In both environments and at both budget levels, centralised-$Q$ resampling outperforms the IVIPER resampling scheme, although the performance gap is less pronounced for the Cologne corridor due to environmental randomness. This result suggests that sampling the training dataset independently for each agent, instead of according to team performance, is insufficient to achieve coordinated behaviour in the resulting students. Meanwhile, removing the budget allocation methods degrades the performance of HYDRAVIPER. Having either one of the budget allocation methods is generally sufficient to improve HYDRAVIPER's reward, except in one case: for cooperative navigation at the low budget level, HYDRAVIPER performs worse when only one budget allocation mechanism is present. Meanwhile, for the Cologne corridor at the low budget level, the variance in HYDRAVIPER's reward is large both when \emph{only} centralised-$Q$ resampling is present, and also when it is \emph{removed}. These results suggest that the primary benefit of the two rollout budget allocation mechanisms is to stabilise HYDRAVIPER's learning process, especially in the low budget setting when extracting the most information from each rollout is critical.

\section{Conclusion and Future Work}
\label{sec:conclusion}
In this work, we introduced a new {\dt}-based interpretable MARL method, HYDRAVIPER. HYDRAVIPER addresses several limitations of prior multi-agent methods that follow the VIPER framework: (1) it improves performance by using a joint dataset resampling scheme based on team $Q$-values, and (2) it improves computational efficiency by adaptively allocating fixed budgets of environment interactions for training and validation, as well as by dividing agents into jointly-trained teams. Based on experiments in benchmark environments for multi-agent coordination and traffic signal control, we showed that HYDRAVIPER achieves performance comparable with MAVIPER (a centralised method) and even neural network experts, all with a runtime less than IVIPER (a decentralised method). We also demonstrated HYDRAVIPER's sample efficiency in its ability to retain a similar level of performance using a fraction of the environment interactions.

Through our experiments in the Ingolstadt corridor environment, we scaled up the VIPER framework to seven agents. To our knowledge, this is the largest team of coordinated agents to which interpretable MARL has been applied so far. However, environments based on real-world domains can have many more agents than the environments that we studied. For example, the review of \citet{Noaeen2022} showed that TSC environments of dozens or even hundreds of agents are used in the RL literature. In the most extreme case, \citet{Chen2020} used parameter-shared MPLight policies as controller agents for an extremely large simulation of 2\,510 traffic lights. Our agent clustering approach shows promise in scaling up to larger environments while retaining performance comparable to that of expert policies. We envision that the flexibility of the HYDRAVIPER framework will allow it to adapt to characteristics of different MARL environments while maintaining Pareto optimality in the performance-computational efficiency tradeoff.

However, our functionally grounded evaluation has not shown whether the resulting {\dt}s are sufficient to help stakeholders better understand these policies. An application-grounded evaluation \cite{DoshiVelez2017} of HYDRAVIPER that includes user studies would be necessary to assess its practical utility. Combining the general framework of HYDRAVIPER with alternative policy structures, including those based on natural language, may help make these {\dt}-based policies more understandable and controllable.


\section*{Acknowledgments}
This work was supported by the Tang Family Endowed Innovation Fund and NSF grant IIS-2046640 (CAREER). Additionally, we thank Naveen Raman, Yixuan Xu, Jingwu Tang, Matteo Pozzi, and Peter Stone for their feedback.

\FloatBarrier
\newpage
\bibliography{ref}

\newpage
\appendix
\onecolumn
\section{Proof of Resampling Theorem}
\label{sec:app:rs-proof}

Here, we provide a proof for \Cref{thm:rs} in \Cref{sec:hydraviper:training}.

\setcounter{theorem}{0}
\begin{theorem}
    Given a dataset of observation-action pairs for team $\team_\ell$ in iteration $m$ of HYDRAVIPER, $\mathcal{D}_\ell = \{(\jointob_\ell, \jointact_\ell)\}$, assume there exists a pair $(\jointob_{\ell k}, \jointact_{\ell k})$ that receives the weight $p_{\ell k}^{(m)} = 0$. Then, in iteration $m+1$ of HYDRAVIPER, this pair also receives the weight $p_{\ell k}^{(m+1)} = 0$.
\end{theorem}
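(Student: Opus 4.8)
The plan is to read off the dependencies of the resampling weight and observe that none of them change between consecutive iterations. Concretely, I would argue that $p_{\ell k}$ is a deterministic function of the observation $\jointob_{\ell k}$ together with objects --- the expert policies $\pi^*$, the expert $Q$-functions $Q^{\pi^*}$, and the team partition $\team_1,\dots,\team_L$ --- that are fixed once and for all before the main loop of \Cref{alg:hydraviper} begins.

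First I would unpack \Cref{eq:resample_prob}: the (unnormalised) weight assigned to the pair is $\bar{V}^{\pi_\ell^*}(\jointob_{\ell k}) - \min_{\jointact_\ell}\bar{Q}^{\pi_\ell^*}(\jointob_{\ell k},\jointact_\ell,\pi^*_{-\ell}(\jointob_{-\ell k}))$. Every term here is evaluated at the fixed observation $\jointob_{\ell k}$ (or its subvector $\jointob_{-\ell k}$) using the fixed expert value and $Q$-functions, and the minimisation ranges over the joint action space of the team $\team_\ell$, which is itself fixed because the partition is computed exactly once (\Cref{alg:hydraviper}, line~4). Hence there is a function $g_\ell$, with no dependence on the iteration index, such that $p_{\ell k}^{(m)} = g_\ell(\jointob_{\ell k})$ for every iteration $m$ in which the sample appears.

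Second I would verify that the pair $(\jointob_{\ell k},\jointact_{\ell k})$ present in $\mathcal{D}_\ell$ at iteration $m$ is still present at iteration $m+1$: the only operation applied to $\mathcal{D}_\ell$ between iterations is the monotone aggregation $\mathcal{D}_\ell \gets \mathcal{D}_\ell \cup \mathcal{D}_\ell^m$ in \Cref{alg:tr-a}, whereas the resampling of \Cref{alg:c-q} writes a fresh dataset $\mathcal{D}_\ell'$ and leaves $\mathcal{D}_\ell$ intact. So the same observation is re-weighted at iteration $m+1$, and by the previous paragraph it receives weight $g_\ell(\jointob_{\ell k}) = p_{\ell k}^{(m)} = 0$. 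The one point to state carefully is the normalisation hidden in the ``$\propto$'' of \Cref{eq:resample_prob}: the normalising constant is strictly positive, and multiplying zero by it still yields zero, so the conclusion carries over to the normalised weights.

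The main obstacle, such as it is, is purely bookkeeping: one has to make explicit the invariants that the experts are never retrained and the clustering is never re-run inside the loop --- so that $g_\ell$ genuinely carries no $m$-dependence --- and that the dataset only ever grows. Once these invariants are in place, the theorem follows immediately by substitution, and indeed the same argument shows $p_{\ell k}$ is constant across all iterations in which the sample is present, not merely at $m$ and $m+1$.
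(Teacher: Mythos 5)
Your proof is correct and rests on the same key observation as the paper's: the weight in \Cref{eq:resample_prob} is computed entirely from the fixed expert policies, expert value and $Q$-functions, and the once-computed team partition, evaluated at the fixed observation $\jointob_{\ell k}$, which persists in $\mathcal{D}_\ell$ because resampling writes to a fresh dataset while aggregation only grows the original. The paper packages this as a short proof by contradiction whereas you argue directly (and additionally handle the normalisation in ``$\propto$'' and note the stronger conclusion that the weight is constant across all iterations), but the substance is the same.
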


\begin{proof}
    If $p_{\ell k}^{(m)} \propto \bar{V}^{\pi_\ell^*}(\jointob_k) - \min_{\jointact_\ell} \bar{Q}^{\pi_\ell^*}(\jointob_k, \jointact_\ell, \pi^*_{-\ell}(\jointob_{-\ell k})) = 0$, then by definition
    \begin{align}
    \begin{split}
        \bar{V}^{\pi_\ell^*}(\jointob_k) := \max_{\jointact_\ell} \bar{Q}^{\pi_\ell^*}(\jointob_k, \jointact_\ell, \pi^*_{-\ell}(\jointob_{-\ell k})) 
        = \min_{\jointact_\ell} \bar{Q}^{\pi_\ell^*}(\jointob_k, \jointact_\ell, \pi^*_{-\ell}(\jointob_{-\ell k})),
    \label{eq:p-iteration-m}
    \end{split}
    \end{align}
    i.e. joint team actions $\jointact_\ell$ have no effect on the value given observation $\jointob_k$. When HYDRAVIPER resamples the dataset in iteration $m$ (\Cref{alg:c-q}, line~3), $(\jointob_{\ell k}, \jointact_{\ell k})$ will not be part of the resampled dataset $\mathcal{D}_\ell'$. However, the resampled dataset $\mathcal{D}_\ell'$ does not replace the original dataset $\mathcal{D}_\ell$.
    
    In iteration $m + 1$, $\mathcal{D}_\ell$ is aggregated with a newly-collected dataset of observation-action pairs $\mathcal{D}_\ell^m$ (\Cref{alg:tr-a}, line~5), and $(\jointob_{\ell k}, \jointact_{\ell k})$ continues to be part of this dataset. A new set of weights $p_\ell^{(m+1)}$ are computed using this expanded dataset (\Cref{alg:hydraviper}, line~8). Assume that $p_{\ell k}^{(m+1)} \ne 0$. Without loss of generality, let $p_{\ell k}^{(m+1)} > 0$. Then by definition
    \begin{align*}
        \bar{V}^{\pi_\ell^*}(\jointob_k) := \max_{\jointact_\ell} \bar{Q}^{\pi_\ell^*}(\jointob_k, \jointact_\ell, \pi^*_{-\ell}(\jointob_{-\ell k}))  
        > \min_{\jointact_\ell} \bar{Q}^{\pi_\ell^*}(\jointob_k, \jointact_\ell, \pi^*_{-\ell}(\jointob_{-\ell k})).
    \end{align*}
    None of $\jointob_k$, $\pi^*_{-\ell}$, $\bar{V}^{\pi_\ell^*}$, or $\bar{Q}^{\pi_\ell^*}$ changed between iterations $m$ and $m+1$, since HYDRAVIPER uses experts and expert value functions to compute the weights. This contradicts \Cref{eq:p-iteration-m}. Thus, if $p_{\ell k}^{(m)} = 0$, then $p_{\ell k}^{(m+1)} = 0$.
\end{proof}

A similar proof holds if the strict equality is replaced by the threshold $\epsilon$.

\section{Application of UCB1 to HYDRAVIPER}
\label{sec:app:ucb}
Although UCB assumes that the arms are bounded in $[0, 1]$, it can be modified in a manner equivalent to rescaling the rewards to remain in $[0, 1]$. HYDRAVIPER relies on the general form of the Chernoff-Hoeffding bound:
\begin{theorem}
    (Theorem 2 of \citet{Hoeffding1963}) For independent random variables $X_1, \ldots, X_n$ with mean $\mu$, $\bar{X} = \frac{1}{n} \sum_{i=1}^n X_i$, and $a_i \leq X_i \leq b_i, \forall i \in \{1, \ldots, n\}$, then for $\alpha > 0$
    \begin{align*}
        \Pr(\bar{X} \geq \mu + \alpha) \leq e^{-\frac{2n^2 \alpha^2}{\sum_{i=1}^n (b_i - a_i)^2}}.
    \end{align*}
    \label{thm:hoeffding}
\end{theorem}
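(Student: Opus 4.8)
The plan is to use the standard Chernoff--Cram\'er exponential moment method. First I would fix a parameter $s > 0$ and apply Markov's inequality to the nonnegative random variable $e^{s(\bar{X} - \mu)}$, giving
\begin{align*}
\Pr(\bar{X} \geq \mu + \alpha) = \Pr\!\left(e^{s(\bar{X} - \mu)} \geq e^{s\alpha}\right) \leq e^{-s\alpha}\,\mathbb{E}\!\left[e^{s(\bar{X} - \mu)}\right].
\end{align*}
Writing $\mu_i = \mathbb{E}[X_i]$, so that $\mu = \frac{1}{n}\sum_i \mu_i$, and using the independence of the $X_i$, the moment generating function factorises as $\mathbb{E}[e^{s(\bar{X} - \mu)}] = \prod_{i=1}^n \mathbb{E}[e^{(s/n)(X_i - \mu_i)}]$, so it suffices to bound each factor.

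The key ingredient is Hoeffding's lemma: if $Y$ has $\mathbb{E}[Y] = 0$ and $a \leq Y \leq b$ almost surely, then $\mathbb{E}[e^{tY}] \leq e^{t^2(b-a)^2/8}$ for all real $t$. To establish it, I would write $Y$ as the convex combination $Y = \frac{b-Y}{b-a}\,a + \frac{Y-a}{b-a}\,b$, invoke convexity of $z \mapsto e^{tz}$, take expectations (using $\mathbb{E}[Y]=0$), and reduce to showing $\varphi(u) := -pu + \ln(1 - p + p e^{u}) \leq u^2/8$, where $p = -a/(b-a) \in [0,1]$ and $u = t(b-a)$. A short computation gives $\varphi(0) = \varphi'(0) = 0$ and $\varphi''(u) = q(1-q) \leq \tfrac{1}{4}$ with $q = p e^{u}/(1 - p + p e^{u}) \in (0,1)$, so a second-order Taylor expansion with Lagrange remainder yields the claim.

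Granting the lemma, I apply it with $Y = X_i - \mu_i$ (mean zero, supported on an interval of length $b_i - a_i$) and $t = s/n$, obtaining
\begin{align*}
\mathbb{E}\!\left[e^{s(\bar{X} - \mu)}\right] \leq \prod_{i=1}^n \exp\!\left(\frac{s^2 (b_i - a_i)^2}{8 n^2}\right) = \exp\!\left(\frac{s^2}{8 n^2}\sum_{i=1}^n (b_i - a_i)^2\right).
\end{align*}
Combining this with the Markov step gives $\Pr(\bar{X} \geq \mu + \alpha) \leq \exp\!\left(-s\alpha + \frac{s^2}{8 n^2}\sum_i (b_i - a_i)^2\right)$, and I then minimise the exponent over $s > 0$: it is quadratic in $s$ with minimiser $s^{\star} = 4 n^2 \alpha / \sum_i (b_i - a_i)^2 > 0$, and substituting $s^{\star}$ produces the exponent $-2 n^2 \alpha^2 / \sum_i (b_i - a_i)^2$, which is exactly the stated bound.

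The only real obstacle is Hoeffding's lemma, and within it the uniform bound $\varphi''(u) \leq 1/4$ --- equivalently, the elementary fact that $q(1-q) \leq 1/4$, or that a $[0,1]$-valued random variable has variance at most $1/4$. Everything else (Markov's inequality, factorisation by independence, and the one-dimensional optimisation over $s$) is routine bookkeeping. Since the paper only uses this inequality as a black box to obtain the logarithmic UCB regret guarantee, in the final write-up I would likely just cite \citet{Hoeffding1963} for the lemma and present only the Chernoff bounding and optimisation steps.
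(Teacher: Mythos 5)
Your proof is correct: the Markov--Chernoff step, the factorisation of the moment generating function by independence, Hoeffding's lemma via the bound $\varphi''(u) = q(1-q) \leq \tfrac{1}{4}$, and the optimisation at $s^{\star} = 4n^2\alpha/\sum_{i}(b_i-a_i)^2$ all check out and produce exactly the stated exponent $-2n^2\alpha^2/\sum_i (b_i-a_i)^2$. For comparison, the paper does not prove this statement at all --- it imports it verbatim as Theorem 2 of Hoeffding (1963) and uses it purely as a black box to obtain Corollary 1 and the logarithmic UCB regret bound --- so your derivation is the canonical exponential-moment argument from the cited source rather than a departure from anything in the paper, and your closing remark that the final write-up would simply cite Hoeffding is precisely what the authors do.
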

\begin{corollary}
    Assume that $\meanreturn_\ell^m$ is bounded by $[a, b]$ with $\Delta = b - a$ for all $i, m$. For $c = 2\Delta^2$, \Cref{thm:hoeffding} shows that, for the empirical mean $\hat{\mu}_\ell^m$, (notation simplified for clarity)
    \begin{align*}
        \Pr\left(\hat{\mu}_\ell^m \geq \mu_\ell^m + \sqrt{\frac{c \ln B}{n_m}}\right) \leq e^{-\frac{2c \ln B}{\Delta^2}} = B^{-4}.
    \end{align*}
    \label{cor:c}
\end{corollary}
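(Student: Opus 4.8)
The plan is to apply \Cref{thm:hoeffding} (the general Chernoff--Hoeffding bound) directly to the empirical mean $\hat{\mu}_\ell^m$, viewing it as the average of $n_m$ independent draws of the return variable. First I would fix the arm $m$ and a value $n = n_m$ of its pull count, and write $\hat{\mu}_\ell^m = \frac{1}{n}\sum_{k=1}^n \meanreturn_{\ell k}$, where each $\meanreturn_{\ell k}$ is an independent sample from the return distribution of the (already trained) policy profile $\hat{\pi}_\ell^m$, hence has mean $\mu_\ell^m$, and by the boundedness assumption satisfies $a \le \meanreturn_{\ell k} \le b$, i.e.\ $b_k - a_k = \Delta$ for every $k$.

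Next I would instantiate \Cref{thm:hoeffding} with $\alpha = \sqrt{c\ln B / n}$. The denominator in the exponent collapses to $\sum_{k=1}^n (b_k - a_k)^2 = n\Delta^2$, so the bound becomes $\exp(-2n^2\alpha^2/(n\Delta^2)) = \exp(-2n\alpha^2/\Delta^2)$. Substituting $\alpha^2 = c\ln B/n$ cancels the dependence on $n$ entirely, leaving $\exp(-2c\ln B/\Delta^2)$; plugging in $c = 2\Delta^2$ yields $\exp(-4\ln B) = B^{-4}$, exactly the claimed inequality. This is precisely the ``rescaling to $[0,1]$'' remark in the text: taking $c$ proportional to $\Delta^2$ is equivalent to first normalising the returns to $[0,1]$ and then invoking the standard $[0,1]$ form of the confidence radius used in \textsc{UCB1}.

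The only genuine subtlety, and the step I would be most careful to phrase correctly, is that in the actual algorithm $n_m$ is a random stopping time determined by the UCB indices rather than a fixed constant, so the corollary is really a \emph{per-count} building block. I would state it for fixed $n$, observe that conditioning on $\{n_m = n\}$ does not disturb the independence or the mean of the first $n$ draws from arm $m$ (the policy profiles are fixed and the environment randomness is exogenous), and note that inside the regret analysis one then takes a union bound over the at most $B$ possible values of $n_m$, exactly as in \citet{Auer2002} --- which is why the confidence radius is written with $\ln B$ rather than $\ln n_m$. Since the corollary as stated is just the single-$n$ tail bound, no further machinery is needed here, and the remaining work is the elementary exponent arithmetic above.
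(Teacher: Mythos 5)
Your derivation is correct and is exactly the computation the paper intends: instantiate \Cref{thm:hoeffding} with $a_k = a$, $b_k = b$ (so the denominator is $n_m\Delta^2$) and $\alpha = \sqrt{c\ln B/n_m}$, after which the $n_m$ dependence cancels and $c = 2\Delta^2$ gives the exponent $-4\ln B$. The paper leaves this arithmetic implicit, so your writeup (including the careful aside about $n_m$ being a random pull count handled by a union bound in the regret analysis, as in \citet{Auer2002}) is a faithful and slightly more explicit version of the same argument.
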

This is the same bound as demonstrated for \textsc{UCB1}, and the same holds for the lower confidence bound $\mu_\ell^m - \sqrt{\frac{c \ln B}{n_m}}$. Overall, this choice of $c$ leads to the same $O(\log B)$ regret bound as UCB1. However, HYDRAVIPER can also be extended to use other MAB algorithms. If the mean returns of each policy profile are assumed to be normally distributed, the \textsc{UCB1-Normal} algorithm \cite{Auer2002} could be used; it also achieves logarithmic regret. This algorithm effectively chooses $c$ to be proportional to the arms' sample variance; the greater the variance, the wider the confidence bound. An offline (but biased) estimate of the sample variance can also be obtained by performing expert rollouts before running UCB.

How do these theoretical results apply empirically? In the cooperative navigation environment, the agents navigate in a $2 \times 2$ square environment, and thus the maximum distance of an agent to a target is $2 \sqrt{2}$. The reward in this environment is the negation of the minimum agent distance to each landmark, plus a penalty of -1 for each agent that the ego agent collides with. Therefore, the maximum possible penalty is $\Delta = 3 \cdot 2\sqrt{2} + 2$, which requires $c \approx 219.88$ to achieve the guarantee of \Cref{cor:c}. However, as shown on the right of \Cref{fig:hyp-cn}, $c = 4$ empirically performs the best; $c = 8$ and $c = 16$ are already excessively conservative given the low randomness in the environment. In physical deception, the reward is the $L_2$ distance of the adversary to the goal, minus the minimum $L_2$ distance of any defender to the goal. Again, the maximum distance from an agent to the goal is $\Delta = 2 \sqrt{2}$. Therefore, $c = 16$ is required to achieve the guarantee of \Cref{cor:c}. In practice, $c = 2$ still led to good performance. In the TSC environments, the reward in these environments is the total queue length (i.e. the total number of stopped vehicles in all lanes), which is effectively unbounded, but the best values empirically are $c = 16$ and $c = 4$.

\section{Full Performance Results}
\label{sec:app:full-perf}
In \Cref{tab:full-perf}, we show the mean and 95\% confidence intervals of the performance of IVIPER, MAVIPER, HYDRAVIPER, the expert, and imitation DT, on all four environments with all of the training and validation rollout budgets we tested.

\begin{table*}[ht]
    \centering
    \resizebox{\linewidth}{!}{
    \begin{tabularx}{1.075\linewidth}{Xrrccccc}
        \toprule
        \textbf{Environment} &  \textbf{Training} & \textbf{Validation} & \textbf{Expert}   & \textbf{Imitation DT} & \textbf{IVIPER}   & \textbf{MAVIPER}  & \textbf{HYDRAVIPER} \\
        \midrule	                         
        \textbf{Cooperative} &			     5000 & 			   5000 & 122.67 $\pm$ 1.67 &	221.19 $\pm$  8.58   & 160.87 $\pm$ 4.31 & 144.35 $\pm$ 2.12 & 144.48 $\pm$ 2.67   \\
		\textbf{Navigation}	 &			     2500 & 			   5000 &                   &	211.84 $\pm$  6.32   & 161.62 $\pm$ 3.66 & 146.28 $\pm$ 3.34 & 144.13 $\pm$ 1.59   \\
							 &			      500 & 			   5000 &                   &	218.22 $\pm$  5.90   & 166.11 $\pm$ 2.67 & 146.91 $\pm$ 3.90 & 148.71 $\pm$ 2.80   \\
							 &			     5000 & 			   2500 &                   &				  	     & 161.94 $\pm$ 4.93 & 145.27 $\pm$ 3.07 & 144.66 $\pm$ 1.62   \\
							 &			     5000 & 			   1500 &                   &				  	     & 163.25 $\pm$ 3.60 & 145.10 $\pm$ 4.69 & 143.86 $\pm$ 1.54   \\
							 &			      500 & 			   1500 &                   &				  	     & 168.31 $\pm$ 5.52 & 153.66 $\pm$ 4.07 & 144.84 $\pm$ 2.12   \\
        \midrule			                                  	   		 						                           
        \textbf{Physical}    &			     5000 & 			   5000 &   8.19 $\pm$ 0.50 &	  6.27 $\pm$  0.43	 &   6.94 $\pm$ 0.52 &   7.74 $\pm$ 0.82 &   7.72 $\pm$ 0.53   \\
		\textbf{Deception}	 &		         2500 & 			   5000 &                   &	  5.73 $\pm$  0.31	 &   6.60 $\pm$ 0.67 &   7.84 $\pm$ 0.66 &   7.91 $\pm$ 0.40   \\
							 &		          500 & 			   5000 &                   &	  5.32 $\pm$  0.61	 &   6.40 $\pm$ 0.44 &   6.30 $\pm$ 0.45 &   7.58 $\pm$ 0.91  \\
							 &		         5000 & 			   2500 &                   &		                 &   7.51 $\pm$ 0.81 &   6.84 $\pm$ 1.21 &   7.38 $\pm$ 0.63  \\
							 &		         5000 & 			   1500 &                   &						 &   6.28 $\pm$ 0.80 &   6.64 $\pm$ 1.37 &   6.85 $\pm$ 0.70  \\
							 &		          500 & 			   1500 &                   &						 &   6.03 $\pm$ 0.70 &   7.36 $\pm$ 0.99 &   7.12 $\pm$ 0.84   \\
        \midrule	                           			           			      						                     
        \textbf{Cologne}    & 		         1000 & 			   1000 &  18.94 $\pm$ 2.49 &   137.67 $\pm$  0.64   &  22.06 $\pm$ 2.91 &  25.85 $\pm$ 5.22 &  16.72 $\pm$ 1.51   \\
		\textbf{Corridor}	& 		          500 & 			   1000 &                   &   211.84 $\pm$  8.00   &  21.60 $\pm$ 2.50 &  28.82 $\pm$ 6.86 &  19.13 $\pm$ 2.70   \\
							& 		          100 & 			   1000 &                   &   218.22 $\pm$ 10.29   &  22.05 $\pm$ 2.72 &  24.47 $\pm$ 5.21 &  16.75 $\pm$ 1.85   \\
							& 		         1000 & 			    500 &                   &						 &  22.07 $\pm$ 3.83 &  21.84 $\pm$ 4.33 &  20.06 $\pm$ 4.15   \\
							& 		         1000 & 			    100 &                   &						 &  19.73 $\pm$ 1.43 &  22.50 $\pm$ 4.38 &  17.12 $\pm$ 2.07   \\
							& 		          100 & 			    100 &                   &						 &  23.40 $\pm$ 3.53 &  28.91 $\pm$ 7.27 &  18.77 $\pm$ 3.69   \\
        \midrule	                           			           			      						                      
        \textbf{Ingolstadt} & 		         1000 & 			   1000 &  23.01 $\pm$ 1.10 &	169.43 $\pm$  5.26	 &  21.51 $\pm$ 2.91 &  24.26 $\pm$ 2.54 &  19.77 $\pm$ 1.51   \\
		\textbf{Corridor}	& 		          500 & 			   1000 &                   &	170.55 $\pm$  3.41	 &  21.79 $\pm$ 1.67 &  23.21 $\pm$ 1.12 &  18.48 $\pm$ 0.74   \\
							& 		          100 & 			   1000 &                   &	166.99 $\pm$  3.67	 &  22.46 $\pm$ 2.52 &  25.20 $\pm$ 2.29 &  20.11 $\pm$ 1.57   \\
							& 		         1000 & 			    500 &                   &						 &  20.08 $\pm$ 0.97 &  24.30 $\pm$ 3.79 &  20.25 $\pm$ 1.26   \\
							& 		         1000 & 			    100 &                   &						 &  22.26 $\pm$ 1.62 &  22.18 $\pm$ 0.87 &  20.28 $\pm$ 2.04   \\
							& 		          100 & 			    100 &                   &						 &  27.23 $\pm$ 5.31 &  23.31 $\pm$ 1.44 &  21.87 $\pm$ 1.59   \\
        \bottomrule
    \end{tabularx}
    }
    
    \caption{Performance (means and 95\% confidence intervals) for IVIPER, MAVIPER, and HYDRAVIPER at different training and validation budget levels. For physical deception, higher rewards are better; for other environments, lower rewards are better.}
    \label{tab:full-perf}
\end{table*}

\FloatBarrier
\section{Full Runtime Results}
\label{sec:app:full-rt}
In \Cref{fig:full-rt-mpe} and \Cref{fig:full-rt-tsc}, we show the mean and 95\% confidence intervals of the runtimes of IVIPER, MAVIPER, and HYDRAVIPER on the four most costly algorithm steps: training rollouts, dataset resampling, DT training, and validation rollouts. \Cref{fig:full-rt-mpe} shows the runtimes for the multi-agent particle world environments, while \Cref{fig:full-rt-tsc} shows the runtimes for the traffic signal control environments.

\begin{figure*}[ht]
    \centering
    \includegraphics[width=\textwidth]{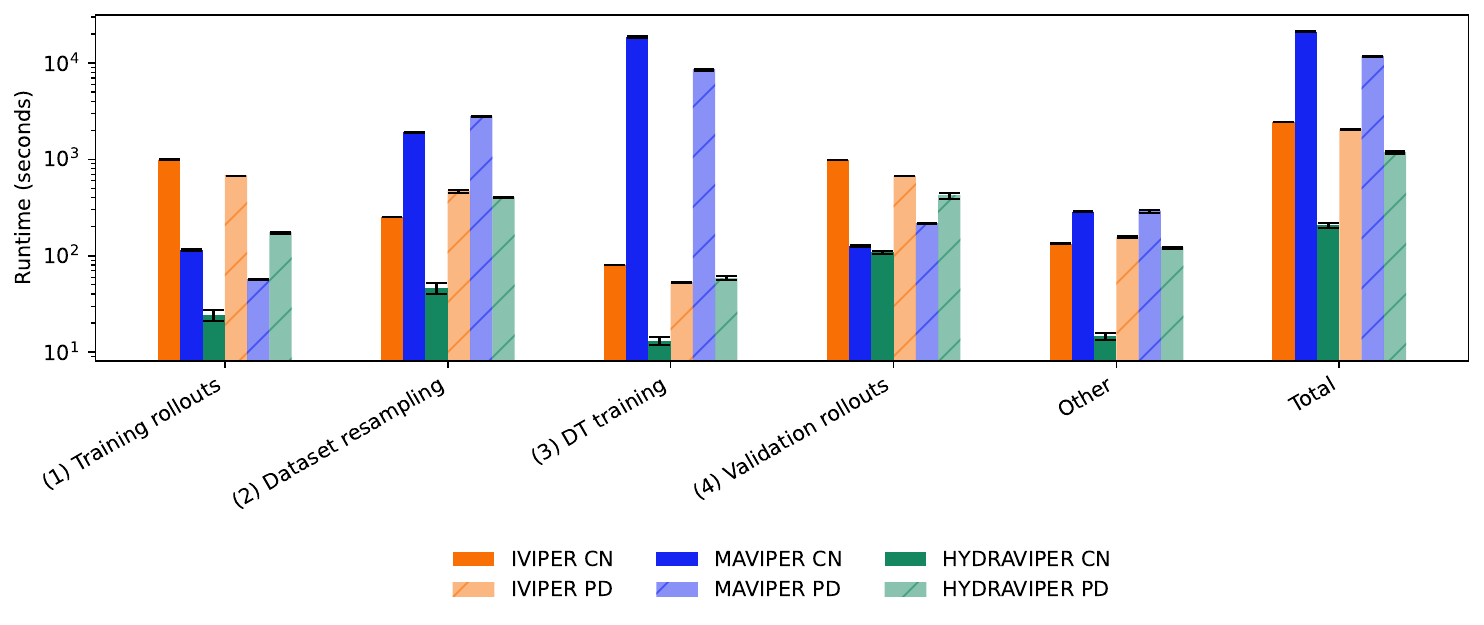}
    \caption{Runtime decomposition for IVIPER, MAVIPER, and HYDRAVIPER (with full environment interaction budget) on the multi-agent particle world environments, cooperative navigation and physical deception. Error bars show 95\% confidence intervals based on 10 random seeds. Runtime measurement was performed following the methodology in \Cref{sec:exp:baselines}.}
    \label{fig:full-rt-mpe}
\end{figure*}

\begin{figure*}[ht]
    \centering
    \includegraphics[width=\textwidth]{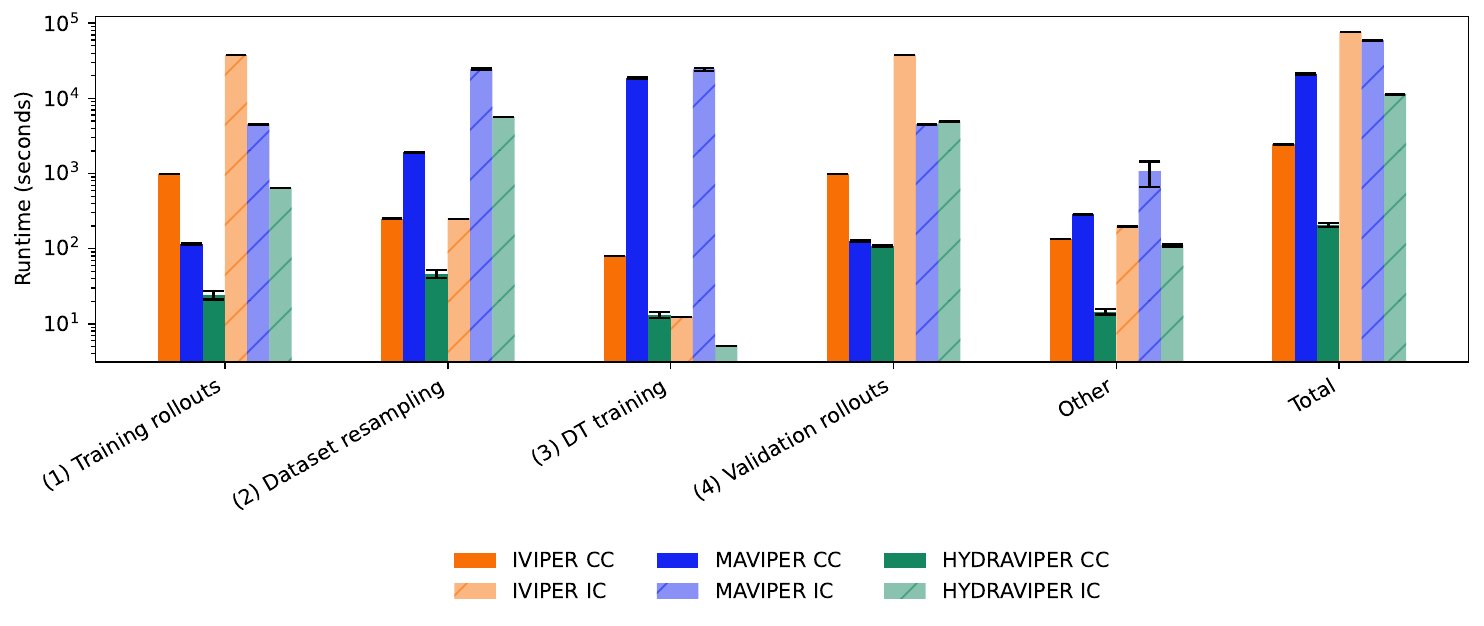}
    \caption{Runtime decomposition for IVIPER, MAVIPER, and HYDRAVIPER (with full environment interaction budget) on the traffic signal control environments, Cologne corridor and Ingolstadt corridor. Error bars show 95\% confidence intervals based on 10 random seeds. Runtime measurement was performed following the methodology in \Cref{sec:exp:baselines}.}
    \label{fig:full-rt-tsc}
\end{figure*}

\section{Full Hyperparameter Sensitivity Results}
\label{sec:app:full-hyp}
In \Cref{fig:hyp-cn}, \Cref{fig:hyp-pd}, \Cref{fig:hyp-cc}, and \Cref{fig:hyp-ic}, we show the performance of HYDRAVIPER in the low budget setting as two hyperparameters --- {\dt} depth and the UCB scaling constant $c$ --- are varied for the cooperative navigation (CN), physical deception (PD), Cologne corridor (CC), and Ingolstadt corridor (IC) environments. We use 500 training/1\,500 validation rollouts for CN and PD, and 100 training/100 validation rollouts for CC and IC.

\begin{figure*}[t]
    \centering
    \begin{subfigure}[ht]{.475\linewidth}
    \includegraphics[width=\linewidth]{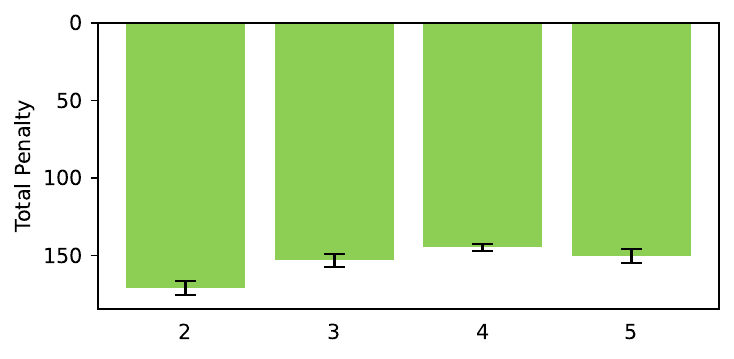}
    \end{subfigure}
    \begin{subfigure}[ht]{.475\linewidth}
    \includegraphics[width=\linewidth]{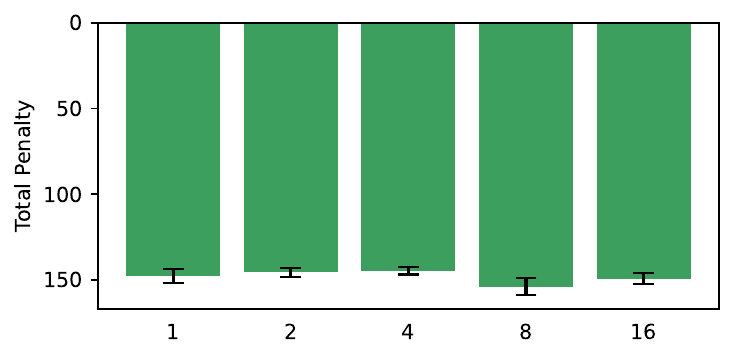}
    \end{subfigure}
    \caption{Sensitivity of HYDRAVIPER to two hyperparameters, {\dt} depth and the UCB scaling constant $c$, on the cooperative navigation (CN) environment. Lower rewards are better.}   
    \label{fig:hyp-cn}
\end{figure*}

\begin{figure*}[t]
    \centering
    \begin{subfigure}[ht]{.475\linewidth}
    \includegraphics[width=\linewidth]{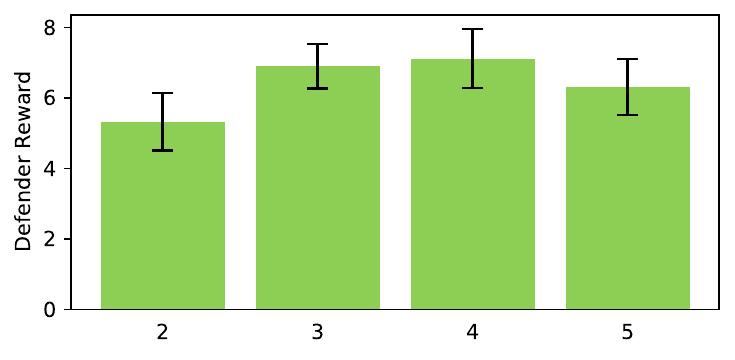}
    \end{subfigure}
    \begin{subfigure}[ht]{.475\linewidth}
    \includegraphics[width=\linewidth]{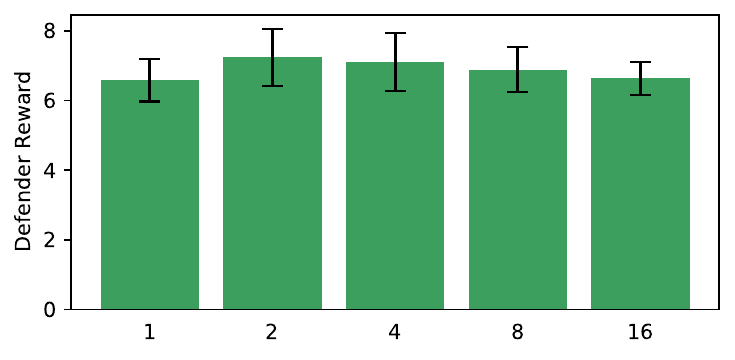}
    \end{subfigure}
    \caption{Sensitivity of HYDRAVIPER to two hyperparameters, {\dt} depth and the UCB scaling constant $c$, on the physical deception (PD) environment. Higher rewards are better.}    
    \label{fig:hyp-pd}
\end{figure*}

\begin{figure*}[t]
    \centering
    \begin{subfigure}[ht]{.475\linewidth}
    \includegraphics[width=\linewidth]{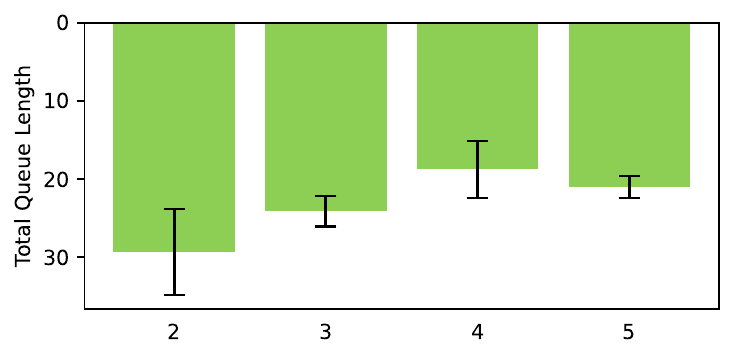}
    \end{subfigure}
    \begin{subfigure}[ht]{.475\linewidth}
    \includegraphics[width=\linewidth]{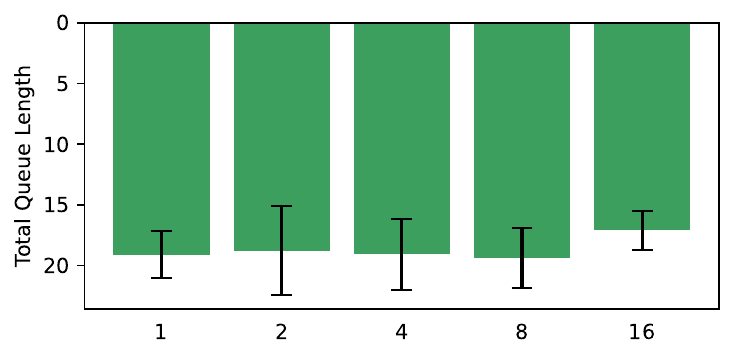}
    \end{subfigure}
    \caption{Sensitivity of HYDRAVIPER to two hyperparameters, {\dt} depth and the UCB scaling constant $c$, on the Cologne corridor (CC) environment. Lower rewards are better.}   
    \label{fig:hyp-cc}
\end{figure*}

\begin{figure*}[t]
    \centering
    \begin{subfigure}[ht]{.475\linewidth}
    \includegraphics[width=\linewidth]{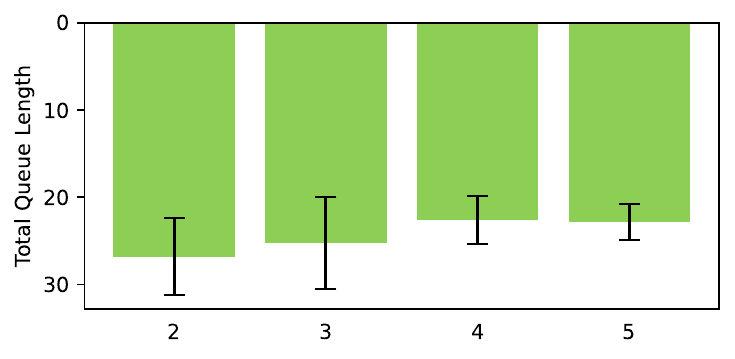}
    \end{subfigure}
    \begin{subfigure}[ht]{.475\linewidth}
    \includegraphics[width=\linewidth]{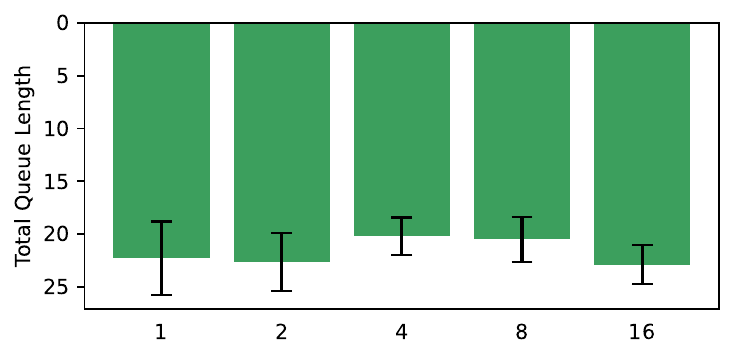}
    \end{subfigure}
    \caption{Sensitivity of HYDRAVIPER to two hyperparameters, {\dt} depth and the UCB scaling constant $c$, on the Ingolstadt corridor (IC) environment. Lower rewards are better.}   
    \label{fig:hyp-ic}
\end{figure*}

\section{Full Ablation Results}
\label{sec:app:full-ablation}
In \Cref{fig:ablation-pd-ic}, we show the performance of HYDRAVIPER in the physical deception (PD, left) and Ingolstadt corridor (IC, right) environments, in an ablation study where each of the algorithm's components --- centralised-$Q$ resampling (C-Q), adaptive training budget allocation (TR-A), and UCB-based validation budget allocation (VR-UCB) --- is removed in turn. For the high budget level, we use 5\,000 training/5\,000 validation rollouts for PD, and 1\,000 training/1\,000 validation rollouts for IC; for the low budget level, we use 500 training/1\,500 validation rollouts for PD, and 100 training/100 validation rollouts for IC.

\begin{figure*}[t]
    \centering
    \begin{subfigure}[ht]{.475\linewidth}
    \includegraphics[width=\linewidth]{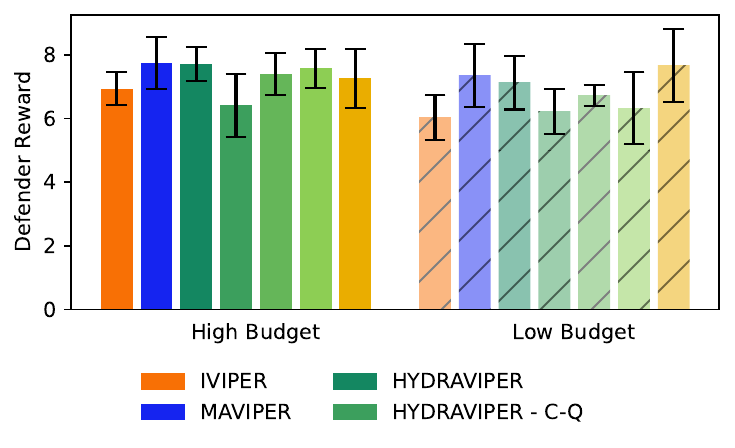}
    \end{subfigure}
    \begin{subfigure}[ht]{.475\linewidth}
    \includegraphics[width=\linewidth]{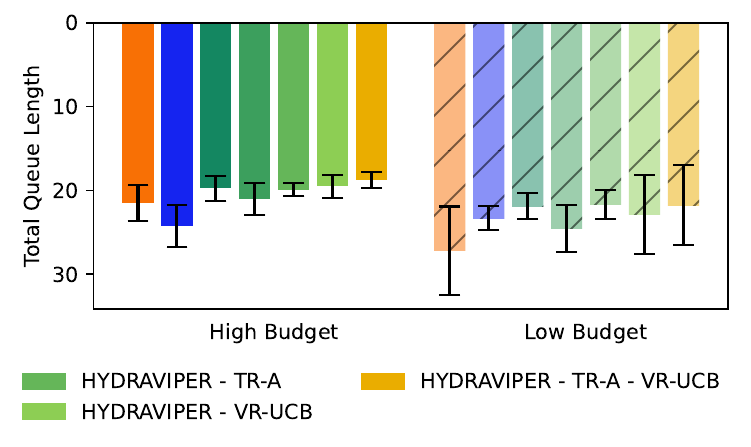}
    \end{subfigure}
    \caption{Ablation on physical deception (PD, left, higher rewards are better) and Ingolstadt corridor (IC, right, lower rewards are better).}
    \label{fig:ablation-pd-ic}
\end{figure*}
\end{document}